\documentclass[]{article}

\usepackage{amsmath}
\usepackage{amssymb}
\usepackage{amsfonts}
\usepackage{epstopdf}
\usepackage{url}
\usepackage{multirow}
\usepackage{tablefootnote}
\usepackage{dsfont}
\usepackage{algorithm}
\usepackage{algorithmic}
\usepackage{subcaption}
\usepackage{amsthm}
\usepackage{graphicx}
\newtheorem{rmk}{Remark}[section]
\newtheorem{pro}{Proposition}[section]
\newtheorem{coro}{Corollary}[section]

\newcommand{\na}{a_\nu}
\newcommand{\nb}{b_\nu}
\newcommand{\ra}{a_\rho}
\newcommand{\rb}{b_\rho}
\newcommand{\sa}{a_s}

\newcommand{\xf}{\widetilde{f}}

\newcommand*{\bW}[2]{W_{#1}^{#2}}
\newcommand*{\tbm}[2]{\tilde{W}_{#1}^{#2}}
\newcommand{\artanh}{\text{artanh}}
\newcommand{\deriv}[3][]{
	\ensuremath{\frac{\partial^{#1} {#2}}{\partial {#3}^{#1}}}
}
\newcommand{\norm}[1]{\left\lVert#1\right\rVert}
\usepackage{natbib}
\bibpunct[, ]{(}{)}{;}{a}{}{,}

\theoremstyle{plain}
\newtheorem{theorem}{Theorem}[section]

\theoremstyle{definition}

\theoremstyle{remark}

\begin{document}

\title{On the Pricing of American Options under Stochastic Local Volatility and Stochastic Correlation via the RBSDE Framework}

\author{
	L. Teng\thanks{Email: teng@math.uni-wuppertal.de}\\
	Applied and Computational Mathematics\\
	University of Wuppertal\\
	Wuppertal, Germany
}

\maketitle

\begin{abstract}
In this work, we study the pricing of American options under stochastic local volatility (SLV) models extended by including stochastic correlation driven by an additional stochastic process. We generalize the class of SLV models by incorporating a flexible stochastic correlation structure. To price options within these extended models, we derive the corresponding reflected forward-backward stochastic differential equations (RBSDEs) and employ data-driven numerical methods to solve them for both pricing and hedging purposes. The RBSDE framework enables the modelling of the future evolution of the option price. Furthermore, we conduct a convergence analysis of the proposed numerical method and present numerical experiments that illustrate the performance of the extended models, as well as the accuracy and efficiency of the RBSDE-based approach.
\end{abstract}

\medskip
\noindent\textbf{Keywords:}
Stochastic local volatility, Stochastic correlation, American options, Reflected forward-backward stochastic differential equations (RBSDEs)

\section{Introduction}

Local volatility (LV) models were proposed in \cite{Dupire1994,Derman1994} for an exact and fast fit to the implied volatility surface of European option prices with the Dupire's formula. However, it has been pointed out in \cite{Hagan2002} that these self-consistent and arbitrage-free models predict the wrong dynamics of the implied volatility curve, leading to unstable hedges of exotic options. An alternative class of parametric models are stochastic volatility (SV) models, e.g. \cite{Heston1993, Hull1987}, which can provide a more realistic term structure for implied volatility surface, in particular future smiles. Stochastic local volatility (SLV) models can be constructed by combing desirable features from both LV models and SV models, the well-known example is the stochastic Alpha Beta Rho (SABR) model \cite{Hagan2002}.

Despite those features of the SLV models, it is quite challenging to compute option prices and hedging quantities due to the correlated Brownian motions and the nonlinear local volatility structure in the price process. In the literature so far, several Monte Carlo simulation methods have been proposed; see, e.g., \cite{Cai2017, Chen2012, Leitao2017} and references therein. Another class of methods is to use asymptotic expansion techniques \cite{Antonov2013, Armstrong2017, Hagan2002} and many others. The performance of the pricing using these methods is not consistent while varying all parameters, e.g., it is well known that the expansion of implied volatility in \cite{Hagan2002} is less accurate for a large maturity, or when the options are deep in or out of the money. This is one of the reasons why a double layer continuous-time Markov chain (CTMC) approximation method is proposed in \cite{Cui2018}, in which the pricing of American options is also considered. This CTMC approximation method is more general and applicable for many SLV models with an arbitrary correlation structure. In \cite{Cui2017}, a transform-based method is proposed to price swaps and options related to discretely sampled realized variance under a general class of stochastic volatility models with jumps, where a CTMC weak approximation scheme is utilized for the variance process.

To the best of our knowledge, SLV models in the existing literature typically assume constant correlations. However, there is extensive empirical evidence \cite{Driessen2009, Teng2016a, Teng2016b, Teng2016c, Teng2018}, and references therein, including that correlations between financial variables - such as between asset returns or between asset prices and their volatilities - are inherently stochastic. Clearly, assuming constant correlations may lead to misestimation and introduce correlation risk.

In this paper, we propose an extension of standard SLV models extended by incorporating a flexible stochastic correlation structure, resulting in what we refer to as stochastic correlation local volatility (SCLV) models. For pricing options European-style options under the SCLV framework - which leads to a three-dimensional problem even for a single underlying asset - methods such as Monte Carlo simulation (in combination with backward dynamic programming) or Fourier transform techniques may be applicable. However, these methods are not well-suited for pricing American-style options, due to the early exercise feature and the increased model complexity. For instance, the commonly used double-layer CTMC approximation would need to be extended to a triple-layer structure to accommodate the stochastic correlation, which poses significant computational challenges. Moreover, deriving hedge parameters with respect to the local volatility, stochastic correlation, and asset price is generally non-trivial within these frameworks.

Therefore, to address these challenges, we employ an RBSDE-based approach. This framework provides a natural and flexible methodology for handling early-exercise features and the multidimensional structure of the SCLV model. Importantly, the RBSDE formulation in our approach not only facilitates the computation of option prices and their sensitivities, but also enables modelling of the future evolution of option prices in a dynamic and consistent manner.

Research on the pricing of high-dimensional American options remains an active area in the literature; see, for example, \cite{Lelong2018,Redmann2023,Yang2024a}. In particular, several state-of-the-art deep learning-based approaches have been proposed. A deep learning method for optimal stopping problems was first introduced in the pioneering work \cite{Becker2019}, while approaches based on deep neural networks (DNNs) for computing stopping boundaries were first studied in \cite{Reppen2025a}, see also \cite{Reppen2025b}. Many other related contributions, such as \cite{Bayraktar2024,Becker2021a,Dong2024,Hure2020,Reppen2023,Yang2024b}, can be employed for pricing American options under the SCLV models. The deep primal-dual algorithm introduced in the recent work \cite{Yang2024b} provides both lower and upper bounds for American option prices and is therefore naturally chosen as the benchmark for our approach.

We summarize the main contributions of this work as follows:\\
1. We extend classical SLV models by introducing a flexible stochastic correlation structure, governed by a suitable stochastic process, resulting in the SCLV model. This generalization enables effective hedging of correlation risk. Within this framework, we derive the RBSDE pricing and hedging formulas (with respect to the asset price, local volatility, and correlation) for American options.\\
2. The proposed RBSDE framework serves as a general and flexible valuation methodology for SCLV models. The proposed RBSDE algorithm achieves high numerical accuracy, as demonstrated through comparisons with the state-of-the-art methods. In addition, we provide a rigorous convergence analysis to support its reliability. Beyond valuation, the RBSDE framework also enables the modelling of the future evolution of the option prices and hedging strategies, making it a powerful tool for both pricing and risk management.\\
3. To the best of our knowledge, pricing and delta hedging of American options under combined stochastic local volatility and stochastic correlation have not been previously studied. Our results therefore establish important benchmarks for future research in this area.

In the next section, we start with extension of the SLV models by including a stochastic correlation structure, aiming to develop the SCLV models. In Section \ref{sec:RBSDEs} we introduce the RBSDEs for pricing American-type options in the SCLV models. Section \ref{sec:scheme} is devoted to introduction to the XGBoost regression-based scheme including error analysis. The applications of the developed RBSDE pricing framework to the classical SABR and Heston stochastic local volatility (HSLV) models extended with stochastic correlation structure are addressed in Section \ref{sec:application}.
In Section \ref{sec:num}, we provide a set of numerical experiments for American options to illustrate the accuracy and efficiency of the proposed RBSDE approach. Finally, Section \ref{sec:conclusion} concludes this work.

\section{Extension of general stochastic local volatility model with stochastic correlation}\label{sec:ExtendedModel}
Throughout the paper, we assume that $(\Omega, \mathcal{F}, P; \{\mathcal{F}_t\}_{0\leq t\leq T})$ is a complete, filtered probability space.
In this space, the standard $d$-dimensional Brownian motions $W_t$ with a finite terminal time $T$ are defined, which generates the filtration $\{\mathcal{F}_t\}_{0\leq t\leq T},$ i.e., $\mathcal{F}_t=\sigma\{W_s, 0\leq s \leq t\}.$ And the usual hypotheses should be satisfied. We denote the set of all $\mathcal{F}_t$-adapted and square
integrable processes in $\mathbb{R}^d$ with $L^2=L^2(0,T;\mathbb{R}^d).$

By combing desirable features from SV models and LV models one obtains the following general class of SLV models
\begin{equation}\label{eq:sec2_1}
	\left\{
	\begin{array}{l}
		d\nu_t = \na(t, \nu_t)\,dt + \nb(t, \nu_t)\,dW_{t}^{\nu},\\
		dS_t = \sa(t, S_t, \nu_t)\,dt + \alpha(\nu_t)\beta(S_t)\,dW_{t}^{s},
	\end{array}\right.
\end{equation}
where $W_{t}^{\nu}$ and $W_{t}^{s}$ are correlated with a constant correlation coefficient $\rho \in (-1,1),$ i.e., $d<W^{\nu},W^s>_t=\rho\,dt.$ Instead of using constant correlation we extend the SLV models \eqref{eq:sec2_1} by incorporating stochastic correlation between $W_{t}^{\nu}$ and $W_{t}^{s},$ as
\begin{equation}\label{eq:sec2_2}
	\left\{
	\begin{array}{l}
		d\rho_t = \ra(t, \rho_t)\,dt + \rb(t, \rho_t)\,dW_{t}^{\rho},\\
		d\nu_t = a_{\nu}(t, \nu_t)\,dt + \nb(t, \nu_t)\,dW_{t}^{\nu},\\
		dS_t = \sa(t, S_t, \nu_t,\rho_t)\,dt + \alpha(\nu_t)\beta(S_t)\,dW_{t}^{s},
	\end{array}\right.
\end{equation}
where $d<W^{\nu},W^s>_t=\mathcal{R}_{\nu s}(\rho_t)\,dt.$ This symbolic notation means that
\begin{equation}\label{eq:sec2_3}
	E[W_t^{\nu}W_t^s]=E\left[\int_0^t\mathcal{R}_{s\nu}(\rho_s)\,ds\right],
\end{equation}
we refer to \cite{Teng2016a, Teng2016c} for its construction. Furthermore, we let $d<W^{\rho},W^{\nu}>_t=\mathcal{R}_{\rho \nu}(\rho_t)\,dt,$ and $W_t^{\rho}$ and $W_t^{s}$ are thus correlated in a way which can be interpreted with $\mathcal{R}_{\rho \nu}(\rho_t)$ and $\mathcal{R}_{\nu s}(\rho_t).$ For this reason, we assume that $d<W^{\rho},W^s>_t=\mathcal{R}_{\rho s}(\mathcal{R}_{\rho \nu}(\rho_t),\mathcal{R}_{\nu s}(\rho_t))\,dt.$ And we assume that $\mathcal{R}_{\rho \nu}(\cdot), \mathcal{R}_{\nu s}(\cdot):(-1,1)\to(-1,1)$ and $\mathcal{R}_{\rho s}(\cdot,\cdot):(-1,1)\times(-1,1)\to (-1,1),$ they are defined such that the family of correlation matrices
\begin{equation}\label{eq:sec2_4}
	\mathcal{R}_t=\left(
	\begin{array}{ccc}
		1 & \mathcal{R}_{\rho\nu}(\rho_t) &\mathcal{R}_{\rho s}(\mathcal{R}_{\rho\nu}(\rho_t),\mathcal{R}_{\nu s}(\rho_t))\\
		\mathcal{R}_{\rho\nu}(\rho_t) & 1 &\mathcal{R}_{\nu s}(\rho_t)\\
		\mathcal{R}_{\rho s}(\mathcal{R}_{\rho\nu}(\rho_t),\mathcal{R}_{\nu s}(\rho_t)) & \mathcal{R}_{\nu s}(\rho_t)&1
	\end{array}
	\right),\quad t \geq 0,
\end{equation}
is positive semi-definite.
For the processes $\rho_t$ and $\nu_t,$ we assume that $\ra(t, \rho_t), \rb(t, \rho_t)$ and $\na(t, \nu_t), \nb(t, \nu_t)$ are $L^2$-measurable in $(t, \nu)\in [0,T]\times\mathbb{R}_{\geq 0}$ and $(t, \rho)\in [0,T]\times (-1,1),$ respectively. And there exist positive constants $C_\rho^1, C_\rho^2$ and $C_{\nu}^1, C_\nu^1$ such that
\begin{equation}\label{eq:sec2_5}
	\begin{array}{l}
		|a_{\rho}(t, \rho_1)|^2+|b_{\rho}(t, \rho_1)|^2\leq C_\rho^1(1+\rho_1^2),\\
		|a_{\nu}(t, \nu_1)|^2+|b_{\nu}(t, \nu_1)|^2\leq C_\nu^1(1+\nu_1^2),\\
		|a_{\rho}(t, \rho_1)-a_{\rho}(t, \rho_2)|+|b_{\nu}(t, \rho_1)-b_{\rho}(t, \rho_2)|\leq C_\rho^2|\rho_1-\rho_2|,\\
		|a_{\nu}(t, \nu_1)-a_{\nu}(t, \nu_2)|+|b_{\nu}(t, \nu_1)-b_{\nu}(t, \nu_2)|\leq C_\nu^2|\nu_1-\nu_2|,
	\end{array}
\end{equation}
for any two values $\rho_1, \rho_2 \in (-1,1),$ and any two values $\nu_1, \nu_2 \in \mathbb{R}_{\geq 0}.$ These conditions guarantee that there exist unique solutions for the processes $\nu_t$ and $\rho_t.$ For the underlying asset process $S_t,$ we assume that $\sa(\cdot,\cdot,\cdot,\cdot):\mathbb{R}_{\geq 0}\times\mathbb{R}_{\geq 0}\times\mathbb{R}_{\geq 0}\times (-1,1)\to\mathbb{R}, \alpha(\cdot):\mathbb{R}_{\geq 0}\to \mathbb{R}_{\geq 0}$ and $\beta(\cdot):\mathbb{R}_{\geq 0}\to \mathbb{R}_{\geq 0}$ are continuously differentiable and defined such that \eqref{eq:sec2_2} has a unique solution.
The model class \eqref{eq:sec2_2} is called stochastic correlation and local volatility (SCLV) throughout this paper.

A simple and common choice of correlation structure is to set that $\mathcal{R}_{\rho\nu}(\rho_t) \equiv 0, \mathcal{R}_{\nu s}(\rho_t) \equiv \rho_t$ and $\mathcal{R}_{\rho s}(0,\rho_t) \equiv \rho_{\rho s},$ where $\rho_{\rho s}$ is a constant in $(-1,1)$. The family of correlation matrices reduces thus to
\begin{equation}\label{eq:sec2_6}
	\left(
	\begin{array}{ccc}
		1 & 0 &\rho_{\rho s}\\
		0 & 1 &\rho_t\\
		\rho_{\rho s}& \rho_t&1
	\end{array}
	\right),
\end{equation}
which is positive semi-definite for $\rho_{\rho s}^2 + \rho_t^2 \leq 1$ and $t\geq 0.$ In this work we consider a more general correlation structure by setting
\begin{equation}\label{eq:sec2_7}
	\mathcal{R}_{\rho\nu}(\rho_t) \equiv \rho_{\rho\nu}, \mathcal{R}_{\nu s}(\rho_t) \equiv \rho_t~\mbox{and}~\mathcal{R}_{\rho s}(\rho_{\rho\nu},\rho_t) \equiv \rho_{\rho\nu}\rho_t,
\end{equation}
where $\rho_{\rho \nu}$ is a constant in $(-1,1).$ This correlation structure corresponds the family of correlation matrices as
\begin{equation}\label{eq:sec2_8}
	\left(
	\begin{array}{ccc}
		1 & \rho_{\rho\nu} &\rho_{\rho\nu}\rho_t\\
		\rho_{\rho\nu} & 1 &\rho_t\\
		\rho_{\rho\nu}\rho_t& \rho_t&1
	\end{array}
	\right).
\end{equation}
The positive semi-definiteness of \eqref{eq:sec2_8} can be guaranteed if it holds that
\begin{equation}\label{eq:sec2_9}
	1-\rho_{\rho\nu}^2-(\rho_{\rho\nu}\rho_t)^2-\rho_t^2+2\rho_{\rho\nu}\rho_{\rho\nu}\rho_t\rho_t=(1-\rho_t^2)(1-\rho_{\rho\nu}^2)\geq 0,
\end{equation}
which is true as long as $\rho_{\rho\nu}, \rho_t \in (-1,1).$ A similar correlation structure is discussed in \cite{Higgins2014}. With the aid of Cholesky-decomposition one can reformulate \eqref{eq:sec2_2} as
\begin{equation}\label{eq:sec2_10}
	\left\{
	\begin{aligned}
		d\rho_t &= \ra(t, \rho_t)\,dt + \rb(t, \rho_t)\,d\tbm{t}{\rho},\\
		d\nu_t &= a_{\nu}(t, \nu_t)\,dt + \nb(t, \nu_t)\left(\rho_{\rho \nu}d\tbm{t}{\rho}+\sqrt{1-\rho_{\rho \nu}^2}d\tbm{t}{\nu}\right),\\
		dS_t &= \sa(t, S_t, \nu_t,\rho_t)\,dt + \alpha(\nu_t)\beta(S_t)\left(\rho_{\rho \nu}\rho_t\,d\tbm{t}{\rho}+\sqrt{1-\rho_{\rho \nu}^2}\rho_t\,d\tbm{t}{\nu}+\sqrt{1-\rho_t^2}\,d\tbm{t}{s}\right),
	\end{aligned}
	\right.
\end{equation}
where the Brownian motions $\tbm{t}{\rho},\tbm{t}{\nu}$ and $\tbm{t}{s}$ are independent. The model class \eqref{eq:sec2_10} consolidates and integrates many various SLV models in the literature extended with stochastic correlation. We refer to \cite{Cui2018} for a comprehensive overview of SLV models in Table 1. Furthermore, a variety of SCPs are summarized in Table \ref{tab:sec2_1}.
\begin{table}[h!]
	\caption{Some SCPs in the literature}\label{tab:sec2_1}
	\begin{tabular}{|l|l|l|l|}
		\hline
		$\ra(t,\rho_t)$&$\rb(t,\rho_t)$&$\rho_t \in$ & Parameter restriction\\
		\hline
		\multirow{2}{*}{$\kappa_\rho(\mu_\rho-\rho_t)$}&\multirow{2}{*}{$\sigma_\rho\sqrt{(1-\rho_t^2)}$}&\multirow{2}{*}{$(-1,\,1)$}&$\kappa_\rho\geq\frac{\sigma_\rho^2}{1 \pm \mu_\rho}$\\
		&&&\cite{Emmerich2007}\\
		\hline
		\multirow{3}{*}{$\kappa_\rho(\mu_\rho-\rho_t)$}&\multirow{3}{*}{$\sigma_\rho\sqrt{(\overline{\rho}-\rho_t)(\rho_t-\underline{\rho})}$\tablefootnote{This SCP is called bounded Jacobi process, and reduces to the SCP in the first row by setting $\underline{\rho}=-1$ and $\overline{\rho}=1.$}}&\multirow{3}{*}{$(\underline{\rho},\, \overline{\rho})$}&$1\geq \overline{\rho}\geq \underline{\rho}\geq -1$ and $\overline{\rho}>\kappa_\rho\mu_\rho>\underline{\rho}$\\
		&&&$\kappa_\rho\mu_\rho-\kappa_\rho\underline{\rho}>\frac{\sigma_\rho^2}{2(\overline{\rho}-\underline{\rho})}$ and $\kappa_\rho\overline{\rho}-\kappa_\rho\mu_\rho>\frac{\sigma_\rho^2}{2(\overline{\rho}-\underline{\rho})}$\\
		&&&\cite{Ma2009}\\
		\hline
		$(1-\rho_t^2)(\kappa_\rho(\mu_\rho-$&\multirow{2}{*}{$\sigma_\rho(1-\rho_t^2)$}&\multirow{2}{*}{$(-1,\,1)$}& No.\\
		$\artanh(\rho_t))-\sigma_\rho^2\rho_t)$&&&\cite{Teng2016c}\\
		\hline
		\multirow{2}{*}{$\kappa_\rho(1-\rho_t^2)(\mu_\rho-\rho_t)$}&\multirow{2}{*}{$\sigma_\rho(1-\rho_t^2)$}&\multirow{2}{*}{$(-1,\,1)$}&No.\\
		&&&\cite{Teng2016a}\\
		\hline
		\multirow{2}{*}{$\kappa_\rho(\mu_\rho-\rho_t)$}&\multirow{2}{*}{$\sigma_\rho\sqrt{\rho_t(1-\rho_t)}$}&\multirow{2}{*}{$(0,1)$}&$\kappa_\rho\mu_\rho>\sigma_\rho^2/2$ and $\kappa_\rho(1-\mu_\rho)>\sigma_\rho^2/2$\\
		&&&\cite{Driessen2016}\\
		\hline
	\end{tabular}
\end{table}
Note that it holds $\kappa_\rho, \sigma_\rho>0$ and $\rho_0, \mu_\rho \in(-1,1)$ for all the SCPs in Table \ref{tab:sec2_1}.

\section{Reflected Forward Backward Stochastic Differential Equations for pricing}\label{sec:RBSDEs}
The decoupled reflected forward-backward stochastic differential equations (RBSDEs) for pricing American options exhibit the form of \cite{Karoui1997b}
\begin{equation}\label{eq:sec3_1}
	\left\{
	\begin{array}{l}
		\,\,\, dX_t = a(t, X_t)\,dt + b(t, X_t)\,dW_t,\quad X_0=x_0,\\
		-dY_t = f(t, X_t, Y_t, Z_t)\,dt - Z_t\,dW_t+dL_t,\\
		\quad Y_T=g(X_T), Y_t\geq g(X_t) ~\mbox{for}~t<T~\mbox{and}~\int_{0}^{T}(Y_t-g(X_t))\,dL_t=0,
	\end{array}\right.
\end{equation}
where $a:[0, T]\times \mathbb{R}^d \to \mathbb{R}^d,$ $b:[0, T]\times \mathbb{R}^d \to \mathbb{R}^{d\times d}$, $f: [0, T] \times \mathbb{R}^d \times \mathbb{R} \times \mathbb{R}^{d} \to \mathbb{R}$ are continuous. $f$ is called the driver function or generator of the RBSDE, and $g$ is a known payoff function. $L_t$ is an adapted non-decreasing process, this external force keeps $Y_t$ staying above the lower barrier process $g(X_t).$ Furthermore, the process $f(t,X_t,Y_t,Z_t)$ is $\mathcal{F}_t$-adapted, and $g(X_T)$ is $\mathcal{F}_T$-measurable. A triplet of processes $(Y_t,Z_t,L_t)\in\mathbb{R}\times\mathbb{R}^d\times\mathbb{R}$ is the solution of the RBSDE \eqref{eq:sec3_1} if it is $\mathcal{F}_t$-adapted, square integrable and satisfies \eqref{eq:sec3_1} as
\begin{equation}\label{eq:sec3_17}
	Y_t = g(X_T) + \int_{t}^T f(s, X_s, Y_s, Z_s)\,ds - \int_t^T Z_s\,dW_s+L_T-L_t,\quad t \in [0, T].
\end{equation}
These solutions exist uniquely under Lipschitz conditions, see \cite{Karoui1997a}. Suppose that the terminal value $Y_T$ is of the form $g(X^{t,x}_T),$ where $X^{t,x}_T$ denotes the solution of $dX_t$ in \eqref{eq:sec3_1} starting from $x$ at time $t,$ the same notation applies to $(Y^{t,x}_T, Z^{t,x}_T)$ and $L^{t,x}_T.$ Note that $L^{t,x}_t=0,$ and for $t \leq\widehat{t}\leq T,$ the solution $(Y^{t,x}_{\widehat{t}}, Z^{t,x}_{\widehat{t}})$ of the RBSDE \eqref{eq:sec3_1} can be represented as
\begin{equation*}
	Y^{t,x}_{\widehat{t}} = u(\widehat{t}, X^{t,x}_{\widehat{t}}), \quad Z^{t,x}_{\widehat{t}}=b^{\top}({\widehat{t}}, X^{t,x}_{\widehat{t}}) \nabla u({\widehat{t}}, X^{t,x}_{\widehat{t}}),
\end{equation*}
$\forall t \in [0, T),$ where $\mathds{1}_{(\cdot)}$ is an indicator function, and $u$ is solution of the obstacle problem for the semi-linear parabolic PDE of the form \cite{Karoui1997a}
\begin{equation}\label{eq:sec3_19}
	\max\left(\frac{\partial u}{\partial t} + \underbrace{\sum_i^na_i \partial_i u + \frac{1}{2}\sum_{i,j}^n (bb^T)_{i,j} \partial^2_{i,j} u}_{:=\mathcal{L}u} + f(t, x, u, (\nabla u)b), \, g(x)-u\right)=0
\end{equation}
with the terminal condition $u(T,x)=g(x).$
In turn, suppose that $(Y, Z, L)$ is the solution of RBSDE, $u(t, x)=Y^{t,x}_t$ is a viscosity solution to the obstacle problem.

For the model in \eqref{eq:sec2_10}, $X_t$ in \eqref{eq:sec3_1} is given by $\left(\rho_t, \nu_t, S_t\right)^\top.$ Given the payoff function of a financial derivative, $g(\cdot), Y_t$ represents the value of the derivative, typically referred to as $V(t,\rho_t,\nu_t,S_t).$ And
\begin{equation}\label{eq:sec3_5}
	Z_t=\left(\left(\rb+\nb\rho_{\rho \nu}+\alpha\beta\rho_{\rho \nu}\rho_t\right)\frac{\partial V}{\partial \rho},\left(\nb\sqrt{1-\rho_{\rho \nu}^2}+\alpha\beta\sqrt{1-\rho_{\rho \nu}^2}\rho_t\right)\frac{\partial V}{\partial \nu},\alpha\beta\sqrt{1-\rho_t}\frac{\partial V}{\partial S}\right)
\end{equation}
which presents the hedging strategies with respect to correlation, volatility and spot price, respectively.

For pricing American options in the SCLV model \eqref{eq:sec2_10}, we assume that the correlation risk can be eliminated by including a risk premium, i.e., the price of correlation risk, defined as $\lambda_\rho.$ Accordingly, the drift of the correlation process $\rho_t$ in \eqref{eq:sec2_10} is adjusted to $\ra(t,\rho_t)-\lambda_\rho \rho_t,$ ensuring that $\rho_t$ evolves under a risk-neutral measure. A similar adjustment is made for the volatility process $\nu_t,$ using a volatility risk premium $\lambda_\nu.$ It is important to note that there is no unique way to specify the parameters $\lambda_\rho$ and $\lambda_\nu,$ since the market does not provide tradable assets for directly hedging correlation risk or volatility risk. To formalize this modelling choice, we define the set of all equivalent local martingale measures by
\begin{equation*}
	\{\mathbb{Q}^{\lambda} \sim P \,|\, dW_t^{\mathbb{Q}^{\lambda}}=dW_t+\lambda\,dt~\mbox{and}~\lambda=(\lambda_S,\lambda_{\nu},\lambda_{\rho})\,\mbox{admissible}\},
\end{equation*}
where $\lambda$ denotes the vector of market prices of risks for the underlying asset, volatility, and correlation, respectively. We assume that $\lambda$ is progressively measurable and satisfies Novikov's condition. Our framework selects one specific measure $\mathbb{Q}^{\lambda}$ from this family such that the resulting RBSDE admit a unique solution. Under the chosen measure $\mathbb{Q}^{\lambda},$ the spot price process $S_t$ evolves with a risk-neutral drift $rS_t,$ where r denotes the risk-free interest rate.

We start with the derivation of pricing RBSDEs for American options in the SCLV model \eqref{eq:sec2_10}. To be more specific, we use the following variant of \eqref{eq:sec2_10}
\begin{equation}\label{eq:sec3_6}
	\left\{
	\begin{aligned}
		d\rho_t &= \ra(t, \rho_t)\,dt + \rb(t, \rho_t)\,d\bW{t}{\rho},\\
		d\nu_t &= a_{\nu}(t, \nu_t)\,dt + \nb(t, \nu_t)\left(\rho_{\rho \nu}d\bW{t}{\rho}+\sqrt{1-\rho_{\rho \nu}^2}d\bW{t}{\nu}\right),\\
		dS_t &= \sa(t, S_t)\,dt + \alpha(\nu_t)\beta(S_t)\left(\rho_{\rho \nu}\rho_t\,d\bW{t}{\rho}+\sqrt{1-\rho_{\rho \nu}^2}\rho_t\,d\bW{t}{\nu}+\sqrt{1-\rho_t^2}\,d\bW{t}{s}\right),
	\end{aligned}
	\right.
\end{equation}
where $\bW{t}{\rho}, \bW{t}{\nu}$ and $\bW{t}{s}$ are independent.
In the following proposition, we present the pricing RBSDEs for American options using the SCLV model as defined in \eqref{eq:sec3_6}.
\begin{pro}
	Suppose that $V(t,\rho_t,\nu_t,S_t), t\in[0,T]$ is the price of an American option with the payoff function $g(\cdot),$ and valuated in \eqref{eq:sec3_6}. Then $V(t,\rho_t,\nu_t,S_t)$ can be obtained by solving the RBSDE
	\begin{equation}\label{eq:sec3_20}
		\left\{
		\begin{array}{l}
			dX_t :=\\
			\begin{pmatrix}
				d\rho_t \\
				d\nu_t \\
				dS_t
			\end{pmatrix}=\begin{pmatrix}
				\ra(t, \rho_t) \\
				\na(t, \nu_t) \\
				\sa(t, S_t)
			\end{pmatrix}\,dt+ \begin{pmatrix}
				\rb(t,\rho_t) &0 &0\\
				\rho_{\rho\nu}\nb(t,\nu_t)&\sqrt{1-\rho_{\rho\nu}^2}\nb(t,\nu_t)&0 \\
				\rho_{\rho\nu}\alpha(\nu_t)\beta(S_t)\rho_t&\sqrt{1-\rho_{\rho\nu}^2}\alpha(\nu_t)\beta(S_t)\rho_t&\alpha(\nu_t)\beta(S_t)\sqrt{1-\rho_t^2}
			\end{pmatrix}\,dW_t,\\
			-dY_t = f(t, X_t, Y_t, Z_t)\,dt - Z_t\,dW_t + dL_t,\\
			\quad Y_T=g(S_T), Y_t \geq g(S_t)~\mbox{for}~t<T~\mbox{and}~\int_{0}^{T}(Y_t-g(X_t))\,dL_t=0,
		\end{array}\right.
	\end{equation}
	where $W_t=\left(\bW{t}{\rho},\bW{t}{\nu},\bW{t}{s}\right)^\top,$ $V(t,\rho_t,\nu_t,S_t)=Y_t,$ $Z_t=\left(Z_{1,t},Z_{2,t},Z_{3,t}\right)$ with
	\begin{equation}\label{eq:sec3_21}
		\begin{aligned}
			Z_{1,t}&=\rb(t,\rho_t)\frac{\partial V}{\partial \rho}+\nb(t,\nu_t)\rho_{\rho \nu}\frac{\partial V}{\partial \nu}+\alpha(\nu_t)\beta(S_t)\rho_{\rho \nu}\rho_t\frac{\partial V}{\partial S},\\
			Z_{2,t}&=\nb(t,\nu_t)\sqrt{1-\rho_{\rho \nu}^2}\frac{\partial V}{\partial \nu}+\alpha(\nu_t)\beta(S_t)\sqrt{1-\rho_{\rho \nu}^2}\rho_t\frac{\partial V}{\partial S},\\
			Z_{3,t}&=\alpha(\nu_t)\beta(S_t)\sqrt{1-\rho_t^2}\frac{\partial V}{\partial S},
		\end{aligned}
	\end{equation}
	\begin{equation}
		L_t = \left(\mathcal{L}g + f + \lambda_\rho\rho_t\deriv{V}{\rho}+\lambda_\nu\nu_t\deriv{V}{\nu}+(\sa(t,S_t)-rS_t)\deriv{V}{S}\right)^{-}\mathds{1}_{\{Y_t=g(S_t)\}},
	\end{equation}
	and
	\begin{equation}\label{eq:sec3_22}
		\begin{aligned}
			f(t,X_t,Y_t,Z_t)=&-rY_t-\frac{\lambda_\rho\rho_t}{\rb(t,\rho_t)}Z_{1,t}-\left(\frac{\lambda_\nu\nu_t}{\nb(t,\nu_t)\sqrt{1-\rho_{\rho\nu}^2}}-\frac{\lambda_\rho\rho_t\rho_{\rho\nu}}{\rb(t,\rho_t)\sqrt{1-\rho_{\rho\nu}^2}}\right)Z_{2,t}\\
			&-\left(\frac{\sa(t,S_t)-rS_t}{\alpha(\nu_t)\beta(S_t)\sqrt{1-\rho_t^2}}-\frac{\lambda_\nu\nu_t\rho_t}{\nb(t,\nu_t)\sqrt{1-\rho_t^2}}\right)Z_{3,t}.
		\end{aligned}
	\end{equation}
\end{pro}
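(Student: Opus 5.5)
We argue through the pricing PDE. First, under the selected measure $\mathbb{Q}^{\lambda}$ we write down the variational inequality for the American price. Under $\mathbb{Q}^{\lambda}$ the drifts of $(\rho_t,\nu_t,S_t)$ become $\ra-\lambda_\rho\rho_t$, $\na-\lambda_\nu\nu_t$ and $rS_t$. Standard optimal stopping theory for the Markov diffusion \eqref{eq:sec3_6} then shows that $V=V(t,\rho,\nu,S)$ solves
\begin{equation*}
\max\Bigl(\partial_t V+\mathcal{L}^{\lambda}V-rV,\; g(S)-V\Bigr)=0,\qquad V(T,\cdot)=g(S),
\end{equation*}
where $\mathcal{L}^{\lambda}$ is the generator of \eqref{eq:sec3_6} with the risk-neutral drifts. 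We then rewrite this as $\mathcal{L}^{\lambda}V=\mathcal{L}V-\lambda_\rho\rho\,\partial_\rho V-\lambda_\nu\nu\,\partial_\nu V+(rS-\sa)\partial_S V$, where $\mathcal{L}$ is the generator under $P$, i.e.\ the operator in \eqref{eq:sec3_19} built from the drift and the matrix $b$ of \eqref{eq:sec3_20}. This brings the inequality into the obstacle form \eqref{eq:sec3_19} with driver
\begin{equation*}
\tilde f=-rV-\lambda_\rho\rho\,\partial_\rho V-\lambda_\nu\nu\,\partial_\nu V-(\sa-rS)\partial_S V .
\end{equation*}

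Second, we identify $Z$ and express the driver through it. The nonlinear Feynman--Kac relation quoted after \eqref{eq:sec3_17} gives $Z_t=b^\top\nabla u$. Because $b$ is lower triangular, multiplying out $b^\top\nabla V$ yields \eqref{eq:sec3_21} directly. To write $\tilde f$ as $f(t,x,y,z)$ we invert the triangular system by back substitution:
\begin{equation*}
\partial_S V=\frac{Z_3}{\alpha\beta\sqrt{1-\rho^2}},\qquad
\partial_\nu V=\frac{Z_2-\alpha\beta\sqrt{1-\rho_{\rho\nu}^2}\,\rho\,\partial_S V}{\nb\sqrt{1-\rho_{\rho\nu}^2}},\qquad
\partial_\rho V=\frac{Z_1-\nb\rho_{\rho\nu}\partial_\nu V-\alpha\beta\rho_{\rho\nu}\rho\,\partial_S V}{\rb}.
\end{equation*}
Substituting these into $\tilde f$ and collecting the coefficients of $Z_1,Z_2,Z_3$ should reproduce \eqref{eq:sec3_22}. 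The resulting $f$ is linear in $(y,z)$, so it is Lipschitz wherever the coefficients are bounded. Existence and uniqueness of $(Y,Z,L)$ then follow from \cite{Karoui1997a}, and the converse statement there identifies $Y_t=u(t,X_t)=V(t,\rho_t,\nu_t,S_t)$.

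Third, we identify $L$. Apply Itô's formula to $V(t,X_t)$ and compare with the backward equation in \eqref{eq:sec3_20}. This gives $dL_t=-(\partial_tV+\mathcal{L}V+f)\,dt$. In the continuation region this expression vanishes by the PDE. On the exercise region $\{Y_t=g(S_t)\}$ we have $V=g$, so formally $dL_t=(\mathcal{L}g+f)^{-}dt$ (taking $\partial_t g=0$). Here $f$ is written with the $\lambda$-terms expressed through the derivatives of $V$, which produces the stated formula for $L_t$ with the indicator $\mathds{1}_{\{Y_t=g(S_t)\}}$. The Skorokhod condition $\int_0^T(Y_t-g)\,dL_t=0$ holds by construction.

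The main obstacle is the back-substitution bookkeeping. The terms coupling $\nu$ and $S$ through $\rho$, and $\rho$ and $\nu$ through $\rho_{\rho\nu}$, must come out with exactly the coefficients in \eqref{eq:sec3_22}. To control this we check each coefficient by taking derivatives of $f$ with respect to $Z_i$ and comparing. A second delicate point is the regularity needed to apply Itô's formula on the free boundary, where $V$ is only $C^1$ across the exercise boundary. We handle this formally in the viscosity sense, or by smooth fit, citing \cite{Karoui1997b}. We also need the non-degeneracy conditions $\rb,\nb,\alpha\beta\sqrt{1-\rho_t^2}\neq0$ so that the divisions are valid.
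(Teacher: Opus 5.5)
Your proposal is correct and follows essentially the paper's own argument. The shared steps are the risk-neutral obstacle problem for $V$, It\^o's formula for $V(t,\rho_t,\nu_t,S_t)$ under the $P$-dynamics \eqref{eq:sec3_6}, the identification $Z_t=b^\top\nabla V$, the driver \eqref{eq:sec3_14} rewritten in terms of $Z$, and $dL_t$ read off on the exercise region. Your back substitution does reproduce \eqref{eq:sec3_22}, because the $\rho_{\rho\nu}$-terms collapse to $\rb\,\partial_\rho V=Z_{1,t}-\rho_{\rho\nu}Z_{2,t}/\sqrt{1-\rho_{\rho\nu}^2}$.

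Two side remarks. First, the appeal to \cite{Karoui1997a} for existence, uniqueness and the converse identification is not generally covered by its Lipschitz hypotheses here (e.g.\ $\rb=\sigma_\rho(1-\rho_t^2)$ vanishes at $\rho_t=\pm1$, so $\lambda_\rho\rho_t/\rb$ is unbounded), but you do not need it, since your It\^o step already exhibits the solution. Second, your $(\mathcal{L}g+f)^-$ with the $P$-generator agrees with the stated $L_t$ on $\{Y_t=g(S_t)\}$ only when the $\mathcal{L}$ there is read as the risk-neutral operator \eqref{eq:operator}, as in the paper's proof.
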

\begin{proof}
	Given the aforementioned assumptions, the SCLV model \eqref{eq:sec3_6} can be reformulated under the risk-neutral measure as
	\begin{equation}\label{eq:sec3_10}
		\left\{
		\begin{aligned}
			d\rho_t &= (\ra(t, \rho_t)-\lambda_\rho\rho_t)\,dt + \rb(t, \rho_t)\,d\bW{t}{*\rho},\\
			d\nu_t &= (a_{\nu}(t, \nu_t)-\lambda_\nu\nu_t)\,dt + \nb(t, \nu_t)\left(\rho_{\rho \nu}d\bW{t}{*\rho}+\sqrt{1-\rho_{\rho \nu}^2}d\bW{t}{*\nu}\right),\\
			dS_t &= rS_t\,dt + \alpha(\nu_t)\beta(S_t)\left(\rho_{\rho \nu}\rho_t\,d\bW{t}{*\rho}+\sqrt{1-\rho_{\rho \nu}^2}\rho_t\,d\bW{t}{*\nu}+\sqrt{1-\rho_t^2}\,d\bW{t}{*s}\right),
		\end{aligned}
		\right.
	\end{equation}
	where the Brownian motions $\bW{t}{*\rho},\bW{t}{*\nu}$ and $\bW{t}{*s}$ under the risk-neural measure are independent. Let $V(t,\rho_t,\nu_t,S_t)$ be the value of an American option within the model \eqref{eq:sec3_10} with a payoff function $g(\cdot),$ which can be represented as the solution of the obstacle problem
	\begin{equation*}
		\max\left(\frac{\partial V}{\partial t} + \mathcal{L}V -rV, \, g-V\right)=0
	\end{equation*}
	with the terminal condition $V(T,\rho_T,\nu_T,S_T)=g(S_T),$ where
	\begin{equation}\label{eq:operator}
		\begin{aligned}
			\mathcal{L}V:=&(\ra(t, \rho_t)-\lambda_\rho\rho_t)\deriv{V}{\rho}+(a_{\nu}(t, \nu_t)-\lambda_\nu\nu_t)\deriv{V}{\nu}+rS_t\deriv{V}{S}+\frac{1}{2}\rb^2(t, \rho_t)\deriv[2]{V}{\rho}\\
			&+\frac{1}{2}\nb^2(t, \nu_t)\deriv[2]{V}{\nu}+\frac{1}{2}\alpha^2(\nu_t)\beta^2(S_t)\deriv[2]{V}{S}+\rho_{\rho \nu}\rb(t, \rho_t)\nb(t, \nu_t)\frac{\partial^2 V}{\partial\rho\partial\nu}\\
			&+\rho_{\rho \nu}\rb(t, \rho_t)\rho_t\alpha(\nu_t)\beta(S_t)\frac{\partial^2 V}{\partial\rho\partial S}
			+\nb(t, \nu_t)\rho_t\alpha(\nu_t)\beta(S_t)\frac{\partial^2 V}{\partial\nu\partial S}.
		\end{aligned}
	\end{equation}
	It can be shown that $V$ satisfies the following partial differential equation (PDE)
	\begin{equation}\label{eq:sec3_11}
		\deriv{V}{t} + \mathcal{L}V  - rV + (\mathcal{L}g -rg)^{-}\mathds{1}_{\{g(S)=V(t,\rho, \nu, S)\}}=0,
	\end{equation}
	where $\mathcal{L}$ is defined in \eqref{eq:operator}.
	Let $Y_t$ be an adapted process and $Y_t=V(t,\rho_t,\nu_t,S_t),$ where $\rho_t,\nu_t$ and $S_t$ are given in \eqref{eq:sec3_6}. Applying It\^o lemma one obtains
	\begin{equation}\label{eq:sec3_12}
		\begin{aligned}
			dY_t&=\left(\deriv{V}{t}+\ra(t, \rho_t)\deriv{V}{\rho}+a_{\nu}(t, \nu_t)\deriv{V}{\nu}+\sa(t,S_t)\deriv{V}{S}+\frac{1}{2}\rb^2(t, \rho_t)\deriv[2]{V}{\rho}+\frac{1}{2}\nb^2(t,\nu_t)\deriv[2]{V}{\nu}\right.\\
			&+\frac{1}{2}\alpha^2(\nu_t)\beta^2(S_t)\deriv[2]{V}{S}+\rho_{\rho \nu}\rb(t, \rho_t)\nb(t, \nu_t)\frac{\partial^2 V}{\partial\rho\partial\nu}+\rho_{\rho \nu}\rb(t, \rho_t)\rho_t\alpha(\nu_t)\beta(S_t)\frac{\partial^2 V}{\partial\rho\partial S}\\
			&\left.+\nb(t, \nu_t)\rho_t\alpha(\nu_t)\beta(S_t)\frac{\partial^2 V}{\partial\nu\partial S}\right)\,dt+\left(\rb(t, \rho_t)\deriv{V}{\rho}+\rho_{\rho \nu}\nb(t, \nu_t)\deriv{V}{\nu}+\rho_{\rho \nu}\alpha(\nu_t)\beta(S_t)\rho_t\deriv{V}{S}\right)\,d\bW{t}{\rho}\\
			&+\left(\sqrt{1-\rho_{\rho \nu}^2}\nb(t,\nu_t)\deriv{V}{\nu}+\sqrt{1-\rho_{\rho \nu}^2}\alpha(\nu_t)\beta(S_t)\rho_t\deriv{V}{S}\right)\,d\bW{t}{\nu}+\sqrt{1-\rho_{t}^2}\alpha(\nu_t)\beta(S_t)\deriv{V}{S}\,d\bW{t}{S}.
		\end{aligned}
	\end{equation}
	We compare the coefficient of $dt$ in the latter equation to the PDE \eqref{eq:sec3_11} and rewrite it as
	\begin{equation}\label{eq:sec3_13}
		\begin{aligned}
			-dY_t=&\left(-rY_t-\lambda_\rho\rho_t\deriv{V}{\rho}-\lambda_\nu\nu_t\deriv{V}{\nu}-(\sa(t,S_t)-rS_t)\deriv{V}{S}\right)\,dt+(\mathcal{L}g -rg)^{-}\mathds{1}_{\{g(S_t)=V(t,\rho_t, \nu_t, S_t)\}}dt\\
			&+\left(\rb(t, \rho_t)\deriv{V}{\rho}+\rho_{\rho \nu}\nb(t, \nu_t)\deriv{V}{\nu}+\rho_{\rho \nu}\alpha(\nu_t)\beta(S_t)\rho_t\deriv{V}{S}\right)\,d\bW{t}{\rho}\\
			&+\left(\sqrt{1-\rho_{\rho \nu}^2}\nb(t,\nu_t)\deriv{V}{\nu}+\sqrt{1-\rho_{\rho \nu}^2}\alpha(\nu_t)\beta(S_t)\rho_t\deriv{V}{S}\right)\,d\bW{t}{\nu}+\sqrt{1-\rho_{t}^2}\alpha(\nu_t)\beta(S_t)\deriv{V}{S}\,d\bW{t}{S}.
		\end{aligned}
	\end{equation}
	By defining the non-decreasing c\`adl\`ag process $L_t:=(\mathcal{L}g -rg)^{-}\mathds{1}_{\{g(S_t)=V(t,\rho_t, \nu_t, S_t)\}},$ setting $Z_{1,t},Z_{2,t}$ and $Z_{3,t}$ equal to the values given in \eqref{eq:sec3_21} and rewriting
	\begin{equation}\label{eq:sec3_14}
		f(t,X_t,Y_t,Z_t):= -rY_t-\lambda_\rho\rho_t\deriv{V}{\rho}-\lambda_\nu\nu_t\deriv{V}{\nu}-(\sa(t,S_t)-rS_t)\deriv{V}{S}
	\end{equation}
	with respect to $Z_{1,t},Z_{2,t}$ and $Z_{3,t},$ we complete the proof.
\end{proof}
\begin{rmk}\label{rem:fr}
	The RBSDE in \eqref{eq:sec3_13} is obtained through a derivation from the model presented in \eqref{eq:sec3_6}, considering the historical, real-world $\mathds{P}$-measure. In this context, it becomes evident that there is no necessity to seek the risk-neutral measure for the spot price dynamic for pricing when employing the RBSDE approach. Assume that the model in \eqref{eq:sec3_6} is already under the risk-neutral measure, the corresponding RBSDE thus reads
	\begin{equation}\label{eq:sec3_15}
		\begin{aligned}
			&-dY_t=-rY_t\,dt+dL_t+\left(\rb(t, \rho_t)\deriv{V}{\rho}s+\rho_{\rho \nu}\nb(t, \nu_t)\deriv{V}{\nu}+\rho_{\rho \nu}\alpha(\nu_t)\beta(S_t)\rho_t\deriv{V}{S}\right)\,d\bW{t}{\rho}\\
			&+\left(\sqrt{1-\rho_{\rho \nu}^2}\nb(t,\nu_t)\deriv{V}{\nu}+\sqrt{1-\rho_{\rho \nu}^2}\alpha(\nu_t)\beta(S_t)\rho_t\deriv{V}{S}\right)\,d\bW{t}{\nu}+\sqrt{1-\rho_{t}^2}\alpha(\nu_t)\beta(S_t)\deriv{V}{S}\,d\bW{t}{S},
		\end{aligned}
	\end{equation}
	which is a special case of \eqref{eq:sec3_13} where $\lambda_\rho=\lambda_\nu=0$ and $\sa(t,S_t)=rS_t.$
\end{rmk}
\section{Regression-based numerical schemes}\label{sec:scheme}
For the previously developed RBSDE pricing framework, we construct a gradient boosting-based backward approximation algorithm. The main idea is to first discretize the time integrands using, for example, the general theta-discretization method \cite{Zhao2009, Zhao2012}, and then approximate the resulting conditional expectations via regression-based techniques. While related regression-based algorithms for standard BSDEs have been proposed in \cite{Teng2021, Teng2022}, the numerical treatment of RBSDEs considered here is fundamentally different due to the presence of the reflection mechanism and the associated obstacle constraint. Consequently, both the semi-discrete and fully discrete schemes developed in the following section require a different construction from those for standard BSDEs. Moreover, the convergence analysis for the RBSDE schemes is new and substantially more challenging, as it must explicitly account for the additional errors induced by the reflection and obstacle approximation. We therefore provide a dedicated rigorous convergence analysis for the proposed RBSDE discretizations.
\subsection{The Semi-discrete scheme} We introduce the time partition for the time interval $[0, T]$
\begin{equation}\label{eq:sec4_1}
	\Delta_t = \{t_i | t_i \in [0, T], i=0,1,\cdots,N_T, t_i<t_{i+1}, t_0=0, t_{N_T}=T \}.
\end{equation}
Let $\Delta t_i = t_{i+1}-t_i$ be the time step, and denote the maximum time step with $\Delta t.$ We need to firstly discretize the forward SDE in \eqref{eq:sec3_1}
\begin{equation}\label{eq:sec4_2}
	X_t = x_0 + \int_0^t a(s, X_s)\,ds + \int_0^t b(s, X_s)\,dW_s,
\end{equation}
suppose that discretization is such that
\begin{equation*}\label{eq:sec4_3}
	E\left[\max_{t_i}\left|X_{t_i}-X^{\Delta_t}_{t_i}\right|^2\right]=\mathcal{O}({\Delta t})
\end{equation*}
for the discretized process $X^{\Delta_t}_{t_i}.$ In cases where the process $X_t$ follows a known distribution, this distribution can be used to generate high-quality samples on the grid $\Delta_t;$ otherwise, the Euler scheme may be employed.

For the backward process in \eqref{eq:sec3_1}, we represent its solution in \eqref{eq:sec3_17} as
\begin{equation}\label{eq:scheme1}
	Y_i = Y_{i+1} + \int_{t_i}^{t_{i+1}} f(s, X_s, Y_s, Z_s)\,ds - \int_{t_i}^{t_{i+1}} Z_s\,dW_s+L_{i+1}-L_i,
\end{equation}
where $Y_i:=Y_{t_i}$ and $Y_{i+1}:=Y_{t_{i+1}},$ the same notation is used for the other variables. For \eqref{eq:scheme1} without reflections, we multiply both sides by $\Delta W_{i+1}:=W_{t_{i+1}}-W_{t_{i}}$ and take the conditional expectation $E_i[\cdot]:=E[\cdot|\mathcal{F}_{t_i}],$ yielding
\begin{equation}\label{eq:scheme2}
	E_i[Y_{i+1}\Delta W_{i+1}] = \Delta t_i Z_i + R_{i}^{Z},
\end{equation}
where $R^{Z}_{i}$ denotes the discretization error. Similarly, by taking the conditional expectation of \eqref{eq:scheme1} without reflections we define
\begin{equation}\label{eq:scheme3}
	\widetilde{Y}_i=E_i[Y_{i+1}] + \Delta t_i f(t_i,X_i,\widetilde{Y}_i,Z_i) + R_{i}^{\widetilde{Y}},
\end{equation}
with discretization error denoted by $R_{i}^{\widetilde{Y}}.$ Next, by comparing the result with the barrier imposed by the reflection condition, we define
\begin{equation}\label{eq:scheme4}
	Y_i = \max\{\widetilde{Y}_i,g(X_i)\}.
\end{equation}
The equations \eqref{eq:scheme1},\eqref{eq:scheme2} and \eqref{eq:scheme3} lead to a time discrete approximation $(Y^{\Delta_t},Z^{\Delta_t})$ of $(Y,Z)$ using following semi-discrete scheme:
\begin{algorithm}[H]
	\caption*{\bf Semi-discrete RBSDE Scheme}
	\begin{algorithmic}
		\STATE{\begin{align}
				Y_{N_T}^{\Delta_t} &= g(X_{N_T}^{\Delta_t}),\label{eq:sec4_4}\\
				~\mbox{\bf For}~i&=N_T-1,\cdots,0: \nonumber\\
				Z_i^{\Delta_t}&=\frac{1}{\Delta t_i}E_i[Y_{i+1}^{\Delta_t}\Delta W_{i+1}],\label{eq:sec4_5}\\
				\widetilde{Y}_i^{\Delta_t}&=E_i[Y_{i+1}^{\Delta_t}] + \Delta{t_i} f(t_i, X_i^{\Delta_t},\widetilde{Y}_i^{\Delta_t},Z_i^{\Delta_t}),\label{eq:sec4_6}\\
				Y_i^{\Delta_t}&=\max\left(\widetilde{Y}_i^{\Delta_t}, g(X_i^{\Delta_t})\right).\label{eq:sec4_6_1}
		\end{align}}
	\end{algorithmic}
\end{algorithm}
Note that \eqref{eq:sec4_6} is implicit and thus iterative methods such as Picard iteration may be applied. A similar discretization of RBSDE is considered in \cite{Ma2005}, where a rigorous convergence rate analysis is provided. This convergence result is further recovered in \cite{Bouchard2008}, which shows that
\begin{equation}
	\max_{i=0,\cdots,N_T-1}E\left|Y_i-Y_i^{\Delta_t}\right|^2=\mathcal{O}(\Delta t^{\frac{1}{2}})
\end{equation}
under assumption (H1): $g\in C_b^{1}$ with $L$-Lipschitz derivatives, and
\begin{equation}\label{eq:diserror1}
	\max_{i=0,\cdots,N_T-1}E\left|Y_i-Y_i^{\Delta_t}\right|^2=\mathcal{O}(\Delta t)
\end{equation}
under assumption (H2): $b\in C_b^{1}$ with $L$-Lipschitz derivatives, and $g\in C_b^{2}$ with $L$-Lipschitz first and second derivatives. Furthermore, it holds that
\begin{equation}\label{eq:diserror2}
	E\left[\sum_{i=0}^{N_T-1}\int_{t_i}^{t_{i+1}}\left|Z_t-Z_i^{\Delta_t}\right|^2dt\right]=\mathcal{O}(\Delta t^{\frac{1}{2}}).
\end{equation}
We direct the reader to \cite{Bouchard2008} for a detailed analysis of the time-discrete approximation scheme presented above.
\subsection{Fully discrete scheme}\label{sec:fullyscheme}
The fully discrete scheme results from the computation of conditional expectations within the semi-discrete scheme, and is expressed as follows:
\begin{algorithm}[H]
	\caption*{\bf Fully discrete RBSDE Scheme}
	\begin{algorithmic}
		\STATE{\begin{align*}
				\widehat{y}_{N_T, \mathcal{M}} &= g(\widehat{x}_{N_T, \mathcal{M}}),\\
				~\mbox{\bf For}~i&=N_T-1,\cdots,0~,~\mathcal{M}=1,\cdots, M:\\
				\widehat{z}_{i, \mathcal{M}}&=E^{\widehat{x}_{i, \mathcal{M}}}_{i}\left[\frac{1}{\Delta t_i}\widehat{Y}^{\Delta_t}_{i+1}\Delta W_{i+1} \right],\\
				\widehat{\widetilde{y}}_{i, \mathcal{M}}&=E^{\widehat{x}_{i, \mathcal{M}}}_{i}\left[\widehat{Y}^{\Delta_t}_{i+1} \right] + \Delta{t_i} f(\widehat{x}_{i,\mathcal{M}},\widehat{\widetilde{y}}_{i, \mathcal{M}},\widehat{z}_{i, \mathcal{M}}),\\
				\widehat{y}_{i, \mathcal{M}}&=\max\left(\widehat{\widetilde{y}}_{i, \mathcal{M}}, g(\widehat{x}_{i, \mathcal{M}})\right).
		\end{align*}}
	\end{algorithmic}
\end{algorithm}
In the fully discrete scheme above, $\widehat{Y}^{\Delta_t}_{i+1}$ denote a random variable whose distribution is represented by the samples $\widehat{y}_{i+1, \mathcal{M}}, \mathcal{M}=1,\cdots,M.$
\paragraph{Non-parameteric regression} As mentioned in the preceding section, we apply a regression-based method to compute the conditional expectation. Specifically, we employ the eXtreme Gradient Boosting (XGBoost) regression algorithm \cite{Chen2016}, a machine learning technique known for its high predictive performance. More precisely, the XGBoost non-parametric regression approach is used to estimate conditional expectations of the form $E[Y|X=x]$ based on the model
\begin{equation}
	Y = \eta(X) + \epsilon,
\end{equation}
where $\epsilon$ has zero mean, bounded variance, and is independent of $X.$ It then follows that
\begin{equation}\label{eq:conditioneta}
	E[Y|X=x]=\eta(x).
\end{equation}
To approximate this conditional expectation, our goal in regression is to construct an estimator of the function, $\widehat{\eta}(x).$ In non-parametric regression, we do not assume a specific functional form for $\eta.$ Instead, $\widehat{\eta}$ is represented by an XGBoost regressor. Suppose we have a dataset, $(\widehat{x}_\mathcal{M}, \widehat{y}_\mathcal{M}),\,\mathcal{M}=1,\cdots,M,$ sampled from $(X, Y).$ We split the data into training and test sets and fit the model, namely, XGBoost regressor. on the training data. The fitted regressor can then be used to predict $E[Y|X=x]$ both in-sample and out-of-sample.

The use of the XGBoost algorithm for approximating conditional expectations in the context of BSDEs has been described in detail in \cite{Teng2022}. We adopt this method for the conditional expectations in the fully discrete RBSDE scheme, which is more challenging. We assume that the process $(X_{i}^{\Delta_t})_{i=0,\cdots,N_t}$ is Markovian, there thus exist deterministic functions such that
\begin{equation}\label{eq:TheFun}
	\widetilde{Y}_i^{\Delta_t} = y_i^{\Delta_t} (X_i^{\Delta_t})~\mbox{and}~Z_i^{\Delta_t} = z^{\Delta_t}_i (X_i^{\Delta_t}),
\end{equation}
where $\widetilde{Y}_i^{\Delta_t}$ and $Z_i^{\Delta_t}$ are defined in \eqref{eq:sec4_6} and \eqref{eq:sec4_5}, respectively, in the semi-discrete RBSDE scheme. In the fully discrete RBSDE scheme, $E^{\widehat{x}_{i, \mathcal{M}}}_{i}[\mathcal{S}_{i+1}]$ denotes the predicted conditional expectation $E[\mathcal{S}_{i+1}|X=\widehat{x}_{i, \mathcal{M}}],$ obtained using the XGBoost model fitted to the dataset of $\mathcal{S}_{i+1},$ denoted by $\widehat{\mathcal{S}}_{i+1,\mathcal{M}},$ where $\mathcal{M}=1,\cdots, M,$ and $M$ is the sample size. The variable $\mathcal{S}_{i+1}$ can be either $\frac{1}{\Delta t_i}\widehat{Y}^{\Delta_t}_{i+1}\Delta W_{i+1}$ or $\widehat{Y}^{\Delta_t}_{i+1},~\mbox{for}~i=N_T-1,\cdots, 0.$

Taking $\mathcal{S}_{i+1}=\frac{1}{\Delta t_i}\widehat{Y}^{\Delta_t}_{i+1}\Delta W_{i+1}$ and its sample $\widehat{\mathcal{S}}_{i+1,\mathcal{M}}:=\frac{1}{\Delta t_{i}}\widehat{y}_{i+1,\mathcal{M}}\Delta \widehat{w}_{i+1,\mathcal{M}}$ as an example, we start at time $T$ by fitting the regressor $\widehat{R}^{z}_{N_T-1}(x)$ to approximate the conditional expectation $E\left[\mathcal{S}_{N_T}|X^{\Delta_t}_{N_T-1}=x\right]$ using the dataset $\left(\widehat{x}_{N_T-1, \mathcal{M}}, \widehat{\mathcal{S}}_{N_T,\mathcal{M}}\right).$ This implies that the function $z^{\Delta t}_{N_T-1}(x)$ defined in \eqref{eq:TheFun} is estimated and represented by the regressor, i.e.,
\begin{equation}\label{eq:regressor}
	z^{\Delta_t}_{N_T-1}(x) \approx \widehat{R}^{z}_{N_T-1}(x),
\end{equation}
which can be used to predict the conditional expectation $E^{\widehat{x}_{N_T-1, \mathcal{M}}}_{N_T-1}\left[\mathcal{S}_{N_T} \right]$ both in-sample and out-of-sample. The predicted values of $E^{\widehat{x}_{N_T-1, \mathcal{M}}}_{N_T-1}\left[\mathcal{S}_{N_T}\right]$ are denoted by $\widehat{z}_{N_T-1, \mathcal{M}},$ which can be interpreted as the generated sample values of $Z^{\Delta_t}_{N_T-1}.$ Recursively, proceeding backward in time, the dataset $\widehat{z}_{N_T-1, \mathcal{M}}$ (and also $\widehat{y}_{N_T-1, \mathcal{M}}$) is used to construct the dataset $\widehat{z}_{N_T-2, \mathcal{M}}$ corresponding to the random variables $Z_{N_T-2}^{\Delta_t}$ at time $t_{N_T-2}.$
At the initial time $t=0,$ we are given a fix initial value $x_0$ for $X,$ i.e., this dataset is deterministic.

For the $i$-th step, we define the approximation errors as
\begin{equation}\label{eq:err_z1}
	R_{\mbox{xgb}}^{Z_i}:=\frac{1}{M}\sum_{\mathcal{M}=1}^M \left(\widehat{z}_{i, \mathcal{M}}-z^{\Delta_t}_i(\widehat{x}_{i,\mathcal{M}})\right)^2.
\end{equation}
Note that $\widehat{\mathcal{S}}_{i+1, \mathcal{M}}=z^{\Delta_t}_i(\widehat{x}_{i,\mathcal{M}})+\epsilon_{i,\mathcal{M}}^{z},$ where the noise term $\epsilon^{z}_i$ has zero mean and a bounded constant variance $Var^{z}_i$. Hence, the expression in \eqref{eq:err_z1} can be reformulated as
\begin{align}
	R_{\mbox{xgb}}^{Z_i}&=\frac{1}{M}\sum_{\mathcal{M}=1}^M \left(\epsilon_{i,\mathcal{M}}^{z}-\left(\widehat{\mathcal{S}}_{i+1,\mathcal{M}}-\widehat{z}_{i, \mathcal{M}}\right)\right)^2\\
	&\leq 2Var^{z}_i+\frac{2}{M}\sum_{\mathcal{M}=1}^M \left(\widehat{\mathcal{S}}_{i+1,\mathcal{M}}-\widehat{z}_{i, \mathcal{M}}\right)^2,\label{eq:err_z}
\end{align}
where the second term represents the XGBoost regression error, which will be analyzed in the sequel.\\
Similarly, for the case $\mathcal{S}_{i+1}=\widehat{Y}^{\Delta_t}_{i+1}$ and $\widehat{\mathcal{S}}_{i+1, \mathcal{M}}=y^{\Delta_t}_i(\widehat{x}_{i,\mathcal{M}})+\epsilon_{i,\mathcal{M}}^{y},$
where $y^{\Delta_t}(\cdot)$ is defined in \eqref{eq:TheFun} and the noise term $\epsilon_{i}^{y}$ has zero mean and a bounded constant variance $Var^{y}_i,$ we define
\begin{equation}\label{eq:err_y}
	R_{\mbox{xgb}}^{Y_i}:=\frac{1}{M}\sum_{\mathcal{M}=1}^M \left(\widehat{y}_{i, \mathcal{M}}-y^{\Delta_t}_i(\widehat{x}_{i,\mathcal{M}})\right)^2,
\end{equation}
where the predicted values are obtained from the XGBoost regressor. We then deduce the bound
\begin{equation}\label{eq:err_y_1}
	R_{\mbox{xgb}}^{Y_i}\leq 2Var^{y}_i+\frac{2}{M}\sum_{\mathcal{M}=1}^M \left(\widehat{\mathcal{S}}_{i+1,\mathcal{M}}-\widehat{y}_{i, \mathcal{M}}\right)^2,
\end{equation}
under the assumption that errors from the iterative method can be neglected by choosing a sufficiently large number of Picard iterations.

\paragraph{Regularized learning objective}
For notational simplicity, we omit the time-step index, e.g., we write $\widehat{x}_{\mathcal{M}}$ instead of $\widehat{x}_{i,\mathcal{M}}.$
Suppose we are given a dataset with $M$ samples
\begin{equation*}
	\mathcal{D}=\left\{(\widehat{x}_{\mathcal{M}},\widehat{\mathcal{S}}_{\mathcal{M}})|~|\mathcal{D}|=M\right\}.
\end{equation*}
We continue to use the approximation for the $Z$-component as an example. A tree ensemble model consisting of $K$ regression trees can then be constructed to predict the output
\begin{equation*}
	\widehat{z}_{\mathcal{M}}=\eta(\widehat{x}_\mathcal{M})=\sum_{k=1}^{K}\xf_k(\widehat{x}_{\mathcal{M}}),\quad \xf_k \in \mathcal{S},
\end{equation*}
where $\mathcal{S}:=\{\xf(x)=\omega_{q(x)},\,q:\mathbb{R}\to \widehat{T},\, \omega \in \mathbb{R}^{\widehat{T}}\}$ is the space of regression trees, and $q$ denotes the structure of each tree, mapping an input $x$ to the corresponding leaf index, and $\omega_j$ represents the score associated with the $j$-th leaf. Each $\xf_k$ corresponds to an independent tree structure $q$ and leaf weights $\omega.$ The total number of leaves is denoted by $\widehat{T}.$

To train the model, we minimize the mean squared error (MSE)
\begin{equation}\label{eq:mse}
	L(\widehat{z}_{\mathcal{M}}, \widehat{\mathcal{S}}_{\mathcal{M}}) = \frac{1}{M}\sum_{\mathcal{M}=1}^{M}(\widehat{z}_{\mathcal{M}}-\widehat{\mathcal{S}}_{\mathcal{M}})^2
\end{equation}
for regression. For a regularized objective we define the regularization term as
\begin{equation*}
	\Omega(\xf) = \gamma \widehat{T} + \frac{1}{2}\lambda \norm{w}^2=\gamma \widehat{T} + \frac{1}{2}\lambda \sum_{j=1}^{\widehat{T}}w_j^2,
\end{equation*}
where $\gamma$ and $\lambda$ are positive regularization parameters, and $w_j$ denotes the score on the $j$-th leaf. This regularization term controls the complexity of the model and helps prevent overfitting. The fully regularized objective is then given by
\begin{equation}\label{eq:objective}
	\mathcal{L}(\eta)= \sum_{\mathcal{M}=1}^M 	L(\widehat{z}_{\mathcal{M}}, \widehat{\mathcal{S}}_{\mathcal{M}}) + \sum_{k=1}^K\Omega(f_k),
\end{equation}
which needs to be minimized. Here, $L$ serves as the loss function that quantifies the discrepancy between the predictions and targets.
\paragraph{Gradient Tree Boosting}
In XGBoost, gradient descent is employed to minimize \eqref{eq:objective}. Accordingly, we minimize the following objective by sequentially adding functions $\xf_k$ in an iterative algorithm
\begin{equation}\label{eq:objetivet}
	\begin{split}
		\mathcal{L}^{(k)} &= \sum_{\mathcal{M}=1}^{M}L(\widehat{\mathcal{S}}_{\mathcal{M}},\widehat{z}_{\mathcal{M}}^{k})+\sum_{j=1}^k\Omega(\xf_j)\\&=\sum_{\mathcal{M}=1}^{M}L(\widehat{\mathcal{S}}_{\mathcal{M}},\widehat{z}_{\mathcal{M}}^{(k-1)}+\xf_k(\widehat{x}_{\mathcal{M}}))+\Omega(\xf_k),
	\end{split}
\end{equation}
where $\widehat{z}_{\mathcal{M}}^{k}=\sum_{j=1}^{k}\xf_j(\mathbf{x_{\mathcal{M}}}),$ and $k=1,\cdots, K.$ That is, \eqref{eq:objective} is minimized by greedily adding $\xf_k.$ To this end, a second-order approximation of \eqref{eq:objetivet} is computed as
\begin{equation*}
	\mathcal{L}^{(k)} \approx \sum_{\mathcal{M}=1}^{M}\left(L(\widehat{\mathcal{S}}_{\mathcal{M}},\widehat{z}_{\mathcal{M}}^{(k-1)})+g_{\mathcal{M}}\xf_k(\widehat{x}_{\mathcal{M}})+\frac{1}{2}h_{\mathcal{M}}\xf^2_k(\widehat{x}_{\mathcal{M}})\right)+\Omega(\xf_k),
\end{equation*}
where $g_{\mathcal{M}}=\partial_{\widehat{z}^{(k-1)}}L(\widehat{\mathcal{S}}_{\mathcal{M}},\widehat{z}^{(k-1)})$ and $h_{\mathcal{M}}=\partial^2_{\widehat{z}^{(k-1)}}L(\widehat{\mathcal{S}}_{\mathcal{M}},\widehat{z}^{(k-1)})$ denote the first- and second-order gradients, respectively. By removing constant terms, one obtains the simplified objective at the $k$-th step
\begin{equation}\label{eq:aobjetivet}
	\mathcal{\widetilde{L}}^{(k)} = \sum_{\mathcal{M}=1}^{M}\left(g_{\mathcal{M}}\xf_k(\widehat{x}_{\mathcal{M}})+\frac{1}{2}h_{\mathcal{M}}\xf^2_k(\widehat{x}_{\mathcal{M}})\right)+\Omega(\xf_k),
\end{equation}
which must be optimized by determining a suitable $\xf_k.$

Next, we show how to construct the tree $\xf_k$ that minimizes this objective. We first define a tree as
\begin{equation*}
	\xf_k(\widehat{x}) = w_{q(\widehat{x})}, \quad w \in \mathbb{R}^{\widehat{T}},
\end{equation*}
and introduce $I_j=\{\mathcal{M}|q(\widehat{x}_{\mathcal{M}})=j\}$ as the instance set of leaf $j,$ which contains the indices of data points mapped to the $j$-th leaf. Then, \eqref{eq:aobjetivet} can be reformulated as
\begin{align*}
	\mathcal{\widetilde{L}}^{(k)} &= \sum_{\mathcal{M}=1}^{M}\left(g_{\mathcal{M}}\xf_k(\widehat{x}_{\mathcal{M}})+\frac{1}{2}h_{\mathcal{M}}\xf^2_k(\widehat{x}_{\mathcal{M}})\right)+\gamma \widehat{T} + \frac{1}{2}\lambda\sum_{j=1}^{\widehat{T}}w_j^2\\
	&=\sum_{j=1}^{\widehat{T}}\left(\left(\sum_{\mathcal{M}\in I_j}g_{\mathcal{M}}\right)w_j+\frac{1}{2}\left(\sum_{\mathcal{M}\in I_j}h_{\mathcal{M}}+\lambda\right)w_j^2\right)+\gamma \widehat{T}.
\end{align*}
For a fixed $q(\widehat{x}),$ the optimal optimal $w_j$ for each leaf $j$ can be computed explicitly as
\begin{equation*}
	w^*_{j}=-\frac{\sum_{\mathcal{M}\in I_j}g_{\mathcal{M}}}{\sum_{\mathcal{M}\in I_j}h_{\mathcal{M}}+\lambda},
\end{equation*}
which leads to the corresponding optimal value of the objective
\begin{equation}\label{eq:xgberror}
	\mathcal{\widetilde{L}}^{(k)}(q)	=-\frac{1}{2}\sum_{j=1}^{\widehat{T}}\frac{(\sum_{\mathcal{M}\in I_j}g_{\mathcal{M}})^2}{\sum_{\mathcal{M}\in I_j}h_{\mathcal{M}}+\lambda}+\gamma\widehat{T}.
\end{equation}
This value can serve as a scoring function to evaluate the quality of a given tree structure $q.$ Due to the high computational cost, it is infeasible to exhaustively enumerate all possible tree structures $q.$ A greedy algorithm, proposed in \cite{Chen2016}, recursively identifies best splitting points until a specified maximum depth is reached. We denote the instance sets of the left and right child nodes after a split by $I_L$ and $I_R$, respectively, with $I=I_L \cup I_R.$ The loss reduction after the split is given by
\begin{equation}
	\mathcal{L}_{\mbox{gain}}=\frac{1}{2}\left(\frac{(\sum_{\mathcal{M}\in I_L}g_{\mathcal{M}})^2}{\sum_{\mathcal{M}\in I_L}h_{\mathcal{M}}+\lambda}+\frac{(\sum_{\mathcal{M}\in I_R}g_{\mathcal{M}})^2}{\sum_{\mathcal{M}\in I_R}h_{\mathcal{M}}+\lambda}-\frac{(\sum_{\mathcal{M}\in I}g_{\mathcal{M}})^2}{\sum_{\mathcal{M}\in I}h_{\mathcal{M}}+\lambda}\right)-\gamma,
\end{equation}
where $\lambda$ is the regularization parameter. This reduction is used to evaluate candidate splits. Based on the gain, nodes with negative loss reduction are pruned. We have now introduced the mathematical formulation underlying XGBoost. For additional techniques implemented to further prevent overfitting and to improve computational efficiency, we refer the reader to \cite{Chen2016}. With the loss functions introduced above, the errors of approximation conditional expectations in \eqref{eq:err_z} and \eqref{eq:err_y_1} can be further reformulated as
\begin{equation}\label{eq:zyerrors}
	R_{\mbox{xgb}}^{Z_i}\leq 2Var^{z}_i+2\widehat{\mathcal{L}}_{\mbox{min}}(\widehat{q}^z_i)~\mbox{and}~R_{\mbox{xgb}}^{Y_i}\leq 2Var^{y}_i+2\widehat{\mathcal{L}}_{\mbox{min}}(\widehat{q}^{y}_i),
\end{equation}
where $\widehat{\mathcal{L}}_{\mbox{min}}(\cdot)$ is defined in \eqref{eq:xgberror}, and $\widehat{q}^{\cdot}_i$ denotes the tree structure that achieves the minimal error at the $i$-th time step.

The time complexity of XGBoost is provided in \cite{Chen2016} as
\begin{equation*}
	\mathcal{O}\left(K\widetilde{d}|\mbox{samples}|+|\mbox{samples}|\log B\right),
\end{equation*}
where $\widetilde{d}$ denotes the maximum depth of the tree, $K$ is the total number of trees, and $|\mbox{samples}|$ is the number of training samples. The parameter $B$ refers to the maximum number of rows in each block; a column block structure is used in the XGBoost algorithm to enable efficient parallel learning. Therefore, we can straightforwardly deduce the complexity of our scheme $\mathcal{O}\left(K\widetilde{d}MdN_T+MdN_T\log B\right).$
\subsection{Convergence analysis}\label{sec:ConAna}
\paragraph{Assumption (H3):} Assumption (H2) holds. Suppose that $X_0$ is $\mathcal{F}_0$-measurable with $E[|X_0|^2]<\infty,$ and that $a$ and $b$ are $L^2$-measurable in $(t,x) \in [0, T]\times \mathbb{R}^d,$ are linear growth bounded and uniformly Lipschitz continuous, i.e., there exist positive constants $K$ and $L$ such that
\begin{align*}
	|a(t,x)|^2 &\leq K(1+|x|^2),\quad |b(t,x)|^2 \leq K(1+|x|^2),\\
	|a(t,x)-a(t,y)| &\leq L|x-y|, \quad |b(t,x)-b(t,y)| \leq L|x-y|
\end{align*}
with $x, y \in \mathbb{R}^d.$

The forward process $X_t$ is numerically approximated, when it can not be simulated based on its distribution. For notational convenience, we omit the dependency of $X_i=X_i^{\Delta_t}$ on the time grid $\Delta_t,$ and the dependency of the local and global errors on the state of $X_i.$
For the implicit schemes we will apply Picard iterations which converges with the Lipschitz assumptions on the driver, and for any initial guess when $\Delta t_i$ is small enough. In the following analysis, we consider the equidistant time discretization. The deterministic functions $y_i^{\Delta_t} (\cdot)~\mbox{and}~z^{\Delta_t}_i (\cdot)$ in \eqref{eq:TheFun} are approximated by the regressors $\widehat{R}^{y}_{i}(\cdot)$ and $\widehat{R}^{z}_{i}(\cdot),$ respectively (see also \eqref{eq:regressor}). We define
$\widehat{\widetilde{Y}}_i^{\Delta_t}:=\widehat{R}^{y}_{i}(X_i^{\Delta_t}),$ $\widehat{Y}_i^{\Delta_t}:=\max\left(\widehat{\widetilde{Y}}_i^{\Delta_t},g(X_i^{\Delta_t})\right)$ and $\widehat{Z}_i^{\Delta_t}:=\widehat{R}^{z}_{i}(X_i^{\Delta_t}),$ and introduce the corresponding approximation errors as
\begin{equation}
	\widehat{\epsilon}^{Y_i}:= Y_i^{\Delta_t}-\widehat{Y}_i^{\Delta_t},~\widehat{\epsilon}^{Z_i}:= Z_i^{\Delta_t}-\widehat{Z}_i^{\Delta_t},~\widehat{\widetilde{\epsilon}}^{Y_i}:= \widetilde{Y}_i^{\Delta_t}-\widehat{\widetilde{Y}}_i^{\Delta_t},
\end{equation}
where $Z_i^{\Delta_t},$ $\widetilde{Y}_i^{\Delta_t}$ and $Y_i^{\Delta_t}$ are defined in \eqref{eq:sec4_5}, \eqref{eq:sec4_6} and \eqref{eq:sec4_6_1}, respectively, in the semi-discrete scheme. As introduced earlier, $\widehat{\widetilde{Y}}_i^{\Delta_t}$ and $\widehat{Z}_i^{\Delta_t}$ denote random variables whose distributions are represented by the generated samples by XGBoost regressors. Thus, we define the global errors as
\begin{equation}
	\epsilon^{Y_i}:= Y_i-\widehat{Y}_i^{\Delta_t},~\epsilon^{Z_i}:= Z_i-\widehat{Z}_i^{\Delta_t},~\widetilde{\epsilon}^{Y_i}:= \widetilde{Y}_i-\widehat{\widetilde{Y}}_i^{\Delta_t},
\end{equation}
where $\widetilde{Y}_i$ and $Y_i$ are defined in \eqref{eq:scheme3} and \eqref{eq:scheme4}, respectively.
\begin{theorem}
	Under Assumption (H3), and noting that $E[\widehat{\epsilon}^{Y_{N_T}}]=0$ due to the known terminal condition, we obtain
	\begin{equation}
		E[|\epsilon^{Y_i}|^2]+ \widetilde{C}\Delta t E[|\epsilon^{Z_i}|^2] \leq \mathcal{O} (\Delta t) + C\sum_{j=i+1}^{N_T} \left( \frac{N_T}{T} Var_j^\mathcal{Y}+\frac{T}{N_T} Var_j^\mathcal{Z}\right),
	\end{equation}
	for all $0 \leq i \leq N_T-1$ and $ 0<\Delta t<1,$ where the constant $\widetilde{C} \in \left(0,\frac{1}{2d}\right),$ and the constant $C$ depends on $T, L,~\mbox{and}~d.$ The quantities $Var_{j}^\mathcal{Y}$ and $Var_{j}^\mathcal{Z}$ represent bounded constants.
\end{theorem}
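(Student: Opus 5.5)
The plan is to split the global error into the time-discretization error of the semi-discrete scheme and the regression error of the fully discrete scheme. Write
\[
\epsilon^{Y_i}=(Y_i-Y_i^{\Delta_t})+\widehat{\epsilon}^{Y_i},\qquad \epsilon^{Z_i}=(Z_i-Z_i^{\Delta_t})+\widehat{\epsilon}^{Z_i}.
\]
Then use $(a+b)^2\le 2a^2+2b^2$. Under (H2), which is contained in (H3), the first pieces are controlled by \eqref{eq:diserror1} and \eqref{eq:diserror2}. The $Y$-part is $\mathcal{O}(\Delta t)$ directly. For the $Z$-part we only need $\Delta t\,E|Z_i-Z_i^{\Delta_t}|^2$, which is bounded by the summed $L^2$ error and hence also small; I will absorb it into $\mathcal{O}(\Delta t)$, accepting a weaker rate there if necessary. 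Everything then reduces to a backward recursion for the regression errors $\widehat{\epsilon}^{Y_i}$ and $\widehat{\epsilon}^{Z_i}$.

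\textbf{Local estimates.} For the $Y$-component, the map $x\mapsto\max(x,g(X_i))$ is 1-Lipschitz, so $|\widehat{\epsilon}^{Y_i}|\le|\widehat{\widetilde{\epsilon}}^{Y_i}|$ and the reflection costs nothing. This is the key point that distinguishes the argument from the BSDE case, and it is essentially free. Next, I split the error of $\widetilde{Y}_i$ as follows:
\begin{align*}
\widetilde{Y}_i^{\Delta_t}-\widehat{\widetilde{Y}}_i^{\Delta_t}
={}& E_i[\widehat{\epsilon}^{Y_{i+1}}]+\Delta t\bigl(f(\cdot,\widetilde{Y}_i^{\Delta_t},Z_i^{\Delta_t})-f(\cdot,\widehat{\widetilde{Y}}_i^{\Delta_t},\widehat{Z}_i^{\Delta_t})\bigr)\\
&+\bigl(\text{regression residual at step } i\bigr).
\end{align*}
The regression residual has second moment bounded via \eqref{eq:zyerrors} by a constant $Var_i^{\mathcal{Y}}$ (variance plus minimal XGBoost loss). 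For the $Z$-component,
\[
\Delta t\,\widehat{\epsilon}^{Z_i}=E_i[\widehat{\epsilon}^{Y_{i+1}}\Delta W_{i+1}]+\Delta t\cdot(\text{residual}),
\]
with residual second moment bounded by $Var_i^{\mathcal{Z}}$. By the Cauchy--Schwarz inequality,
\[
|E_i[\widehat{\epsilon}^{Y_{i+1}}\Delta W_{i+1}]|^2\le d\,\Delta t\,\bigl(E_i[|\widehat{\epsilon}^{Y_{i+1}}|^2]-|E_i\widehat{\epsilon}^{Y_{i+1}}|^2\bigr).
\]
This is the standard martingale-increment trick, and it is what produces $\widetilde{C}<1/(2d)$.

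\textbf{Combining.} Square the $Y$ relation and apply Young's inequality with weight $(1+\gamma\Delta t)$ on the conditional-expectation term and $(1+1/(\gamma\Delta t))$ on the residual; this is why $Var^{\mathcal{Y}}_j$ enters multiplied by $1/\Delta t=N_T/T$. Use the Lipschitz property of $f$ with constant $L$, and add $\widetilde{C}\Delta t$ times the $Z$ bound. Choosing $\widetilde{C}d$ small, the negative term $-|E_i\widehat{\epsilon}^{Y_{i+1}}|^2$ cancels the $(E_i\widehat{\epsilon}^{Y_{i+1}})^2$ contribution, and the $L^2\Delta t^2|\widehat{\epsilon}^{Z_i}|^2$ term is absorbed for small $\Delta t$. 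The result is a one-step inequality
\[
E|\widehat{\epsilon}^{Y_i}|^2+\widetilde{C}\Delta t\,E|\widehat{\epsilon}^{Z_i}|^2\le(1+C\Delta t)\,E|\widehat{\epsilon}^{Y_{i+1}}|^2+C\Bigl(\tfrac{N_T}{T}Var_{i+1}^{\mathcal{Y}}+\tfrac{T}{N_T}Var_{i+1}^{\mathcal{Z}}\Bigr),
\]
up to index shifts. A discrete Gronwall argument from $i=N_T$, where $\widehat{\epsilon}^{Y_{N_T}}=0$, with $(1+C\Delta t)^{N_T}\le e^{CT}$, then yields the sum $\sum_{j=i+1}^{N_T}$. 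Adding back the discretization part gives the claim.

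\textbf{Main obstacle.} The delicate step is the bookkeeping of the Young and Cauchy--Schwarz weights. The $Z$ error must be absorbed without losing a factor $1/\Delta t$, and at the same time the factor multiplying $E|\widehat{\epsilon}^{Y_{i+1}}|^2$ must stay at $1+\mathcal{O}(\Delta t)$. I would follow the standard theta-scheme BSDE error analysis of \cite{Zhao2009,Teng2021} for this, which fixes $\gamma$ and forces $\widetilde{C}\in(0,1/(2d))$.
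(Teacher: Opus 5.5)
Your proposal is correct and follows the paper's proof essentially step for step. The shared ingredients are: the 1-Lipschitz property of $\max(\cdot,g(X_i))$ to pass from $\widehat{\widetilde{\epsilon}}^{Y_i}$ to $\widehat{\epsilon}^{Y_i}$; the Cauchy--Schwarz bound on $E_i[\widehat{\epsilon}^{Y_{i+1}}\Delta W_{i+1}]$, whose $-|E_i[\widehat{\epsilon}^{Y_{i+1}}]|^2$ term keeps the amplification factor at $1+\mathcal{O}(\Delta t)$ once the $Z$-estimate is added with weight at most $\Delta t/(2d)$; Young's inequality with weights $1+\gamma\Delta t$ and $1+1/(\gamma\Delta t)$; a discrete Gronwall step; and finally adding the semi-discrete errors \eqref{eq:diserror1}--\eqref{eq:diserror2}.

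The weaker rate you are prepared to accept for the $Z$-discretization piece is not needed. This error enters only with the weight $\Delta t$, so the crude bound $\Delta t\,E|Z_{t_i}-Z_i^{\Delta_t}|^2\le 2\Delta t\,(E|Z_{t_i}|^2+E|Z_i^{\Delta_t}|^2)$, combined with the standard uniform second-moment bounds on $Z_{t_i}$ and $Z_i^{\Delta_t}$ under (H3), already gives $\mathcal{O}(\Delta t)$. By contrast, \eqref{eq:diserror2} alone gives only $\mathcal{O}(\Delta t^{1/2})$, and for a time-integrated error rather than the value at $t_i$. The paper's one-line appeal to \eqref{eq:diserror2} glosses over this same point.
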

\begin{proof}
	For the $Z$-component $(0\leq i \leq N_T-1),$ we consider
	\begin{equation}\label{eq:ez}
		\widehat{\epsilon}^{Z_i} = \frac{1}{\Delta t}E_i[Y_{i+1}^{\Delta_t}\Delta W_{i+1}] - \frac{1}{\Delta t}E^{\widehat{x}_i}_i[\widehat{{Y}}_{i+1}^{\Delta_t}\Delta W_{i+1}]=\frac{1}{\Delta t}E_i[\widehat{\epsilon}^{{Y}_{i+1}}\Delta W_{i+1}]+R_{\mbox{xgb}}^{Z_{i}},
	\end{equation}
	where $E^{\widehat{x}_i}_i[\widehat{\mathcal{S}}]$ denotes the approximation of $E^i[\mathcal{S}]$ obtained using an XGBoost regressor trained on the dataset $(\widehat{x}_{i,\mathcal{M}}, \widehat{\mathcal{S}}_\mathcal{M}), \mathcal{M}=1,\cdots, M.$ Consequently, $E^{\widehat{x}_i}_i[\widehat{{Y}}_{i+1}^{\Delta_t}\Delta W_{i+1}]=E_i[Y_{i+1}^{\Delta_t}\Delta W_{i+1}]+R_{\mbox{xgb}}^{Z_{i}},$ where $R_{\mbox{xgb}}^{Z_{i}}$ is the approximation error introduced by the XGBoost model and defined in \eqref{eq:err_z1}. By applying Cauchy-Schwarz inequality, we estimate $E_i[\widehat{\epsilon}^{{Y}_{i+1}}\Delta W_{i+1}]$ in the Frobenius norm as
	\begin{align*}
		\left|E_i[\widehat{\epsilon}^{{Y}_{i+1}}\Delta W_{i+1}]\right|^2&=\left|E_i[(\widehat{\epsilon}^{{Y}_{i+1}}-E_i[\widehat{\epsilon}^{{Y}_{i+1}}])\Delta W_{i+1}]\right|^2\\ &\leq d \Delta t \left(E_i[|\widehat{\epsilon}^{{Y}_{i+1}}|^2]-|E_i[\widehat{\epsilon}^{{Y}_{i+1}}]|^2\right),
	\end{align*}
	which can then be substituted into \eqref{eq:ez} to yield
	\begin{equation}\label{eq:ez2}
		|\widehat{\epsilon}^{Z_i}|^2\leq\frac{2d}{\Delta t}\left(E_i[|\widehat{\epsilon}^{{Y}_{i+1}}|^2]-|E_i[\widehat{\epsilon}^{{Y}_{i+1}}]|^2\right)+2|R_{\mbox{xgb}}^{Z_{i}}|^2.
	\end{equation}
	For the $\widetilde{Y}$-component, we consider
	\begin{align*}
		|\widehat{\widetilde{\epsilon}}^{Y_i}|&\leq \left|\widetilde{Y}_i^{\Delta_t}-\widehat{\widetilde{Y}}_i^{\Delta_t}\right|=\left|E_i[Y_{i+1}^{\Delta_t}]+f(t_i,X_i^{\Delta_t},\widetilde{Y}_i^{\Delta_t},Z_i^{\Delta_t})-E^{\widehat{x}_i}_i[\widehat{Y}_{i+1}^{\Delta_t}]-f(t_i,X_i^{\Delta_t},\widehat{\widetilde{Y}}_i^{\Delta_t},\widehat{Z}_i^{\Delta_t})\right|,\\
		&\leq \left|E_i[\widehat{\epsilon}^{Y_{i+1}}] + \Delta t L (\widehat{\widetilde{\epsilon}}^{Y_{i}}+\widehat{\epsilon}^{Z_{i}})+R_{\mbox{xgb}}^{Y_i}\right|,
	\end{align*}
	where $R_{\mbox{xgb}}^{Y_i}$ given in \eqref{eq:err_y_1}.
	Applying Young's inequality, we estimate
	\begin{align*}
		|\widehat{\widetilde{\epsilon}}^{Y_i}|^2
		&\leq (1+\gamma \Delta t)\left|E_i[\widehat{\epsilon}^{Y_{i+1}}]\right|^2 + (1+\frac{1}{\gamma \Delta t})\left|\Delta t L (\widehat{\widetilde{\epsilon}}^{Y_{i}}+\widehat{\epsilon}^{Z_{i}})+R_{\mbox{xgb}}^{Y_i}\right|^2.\\
		& \leq (1+\gamma \Delta t)\left|E_i[\widehat{\epsilon}^{Y_{i+1}}]\right|^2 + 3L^2\Delta t^2 |\widehat{\widetilde{\epsilon}}^{Y_i}|^2+ 3L^2\Delta t^2 |\widehat{\epsilon}^{Z_i}|^2 + 3|R_{\mbox{xgb}}^{Y_i}|^2\\
		&~~~~~~~~~~~~~+ \frac{1}{\gamma \Delta t}\left(3L^2\Delta t^2 |\widehat{\widetilde{\epsilon}}^{Y_i}|^2+ 3L^2\Delta t^2 |\widehat{\epsilon}^{Z_i}|^2 + 3|R_{\mbox{xgb}}^{Y_i}|^2\right),
	\end{align*}
	for some $\gamma >0$ to be chosen later. Consequently, by collecting coefficients, we obtain
	\begin{align*}
		\left(1-3L^2\Delta t^2-\frac{3L^2\Delta t}{\gamma}\right)|\widehat{\widetilde{\epsilon}}^{Y_i}|^2
		& \leq (1+\gamma \Delta t)\left|E_i[\widehat{\epsilon}^{Y_{i+1}}]\right|^2 + 3L^2\Delta t^2 |\widehat{\epsilon}^{Z_i}|^2 + 3|R_{\mbox{xgb}}^{Y_i}|^2\\
		&~~~~~~~~+ \frac{1}{\gamma \Delta t}\left(3L^2\Delta t^2 |\widehat{\epsilon}^{Z_i}|^2 + 3|R_{\mbox{xgb}}^{Y_i}|^2\right).
	\end{align*}
	Since
	\begin{equation*}
		|\widehat{\epsilon}^{Y_i}|=\left|\max\left(\widetilde{Y}_i^{\Delta_t},g(X_i^{\Delta_t})\right)-\max\left(\widehat{\widetilde{Y}}_i^{\Delta_t},g(X_i^{\Delta_t})\right)\right|\leq\left|\widetilde{Y}_i^{\Delta_t}-\widehat{\widetilde{Y}}_i^{\Delta_t}\right|=|\widehat{\widehat{\epsilon}}^{Y_i}|,
	\end{equation*}
	$|\widehat{\widetilde{\epsilon}}^{Y_i}|$ can be replaced by $|\widehat{\epsilon}^{Y_i}|,$ leading to
	\begin{equation}\label{eq:ey}
		\begin{split}
			\left(1-3L^2\Delta t^2-\frac{3L^2\Delta t}{\gamma}\right)|\widehat{\epsilon}^{Y_i}|^2
			& \leq (1+\gamma \Delta t)\left|E_i[\widehat{\epsilon}^{Y_{i+1}}]\right|^2 + \left(3L^2\Delta t^2+\frac{3L^2\Delta t}{\gamma}\right) |\widehat{\epsilon}^{Z_i}|^2 \\
			&~~~~~~~~+ \left(\frac{3}{\gamma \Delta t} + 3\right)|R_{\mbox{xgb}}^{Y_i}|^2.
		\end{split}
	\end{equation}
	Using \eqref{eq:ez2} and \eqref{eq:ey}, we compute
	\begin{align*}
		&\left(1-3L^2\Delta t^2-\frac{3L^2\Delta t}{\gamma}\right)|\widehat{\epsilon}^{Y_i}|^2 + \left(\frac{\Delta t}{2d}-3L^2\Delta t^2-\frac{3L^2\Delta t}{\gamma}\right)|\widehat{\epsilon}^{Z_i}|^2\leq (1+\gamma \Delta t)\left|E_i[\widehat{\epsilon}^{Y_{i+1}}]\right|^2\\
		&~~~~~~~~~~~~+ \left(\frac{3}{\gamma \Delta t} + 3\right)|R_{\mbox{xgb}}^{Y_i}|^2+\left(E_i[|\widehat{\epsilon}^{{Y}_{i+1}}|^2]-|E_i[\widehat{\epsilon}^{{Y}_{i+1}}]|^2\right)+\frac{\Delta t}{2d}|R_{\mbox{xgb}}^{Z_{i}}|^2.
	\end{align*}
	We choose $\gamma\geq 6dL^2$ so that $\frac{1}{2d}-\frac{2L^2}{\gamma}\geq 0.$ Hence, there exist generic constants $\widehat{C}$ with $0<\widehat{C}<\frac{1}{2d},$ depending only on $L, \gamma,$ and $d,$ such that the inequality can be written as
	\begin{align*}
		&\left(1-\widehat{C}\Delta t\right)|\widehat{\epsilon}^{Y_i}|^2 + \Delta t \widetilde{C}\left(1-\widehat{C}\Delta t\right)|\widehat{\epsilon}^{Z_i}|^2\leq (1+\widehat{C} \Delta t)E_i[|\widehat{\epsilon}^{{Y}_{i+1}}|^2]\\
		&~~~~~~~~~~~+\left(\frac{3}{\gamma \Delta t} + 3\right)|R_{\mbox{xgb}}^{Y_i}|^2+\frac{\Delta t}{2d}|R_{\mbox{xgb}}^{Z_{i}}|^2.
	\end{align*}
	By induction and taking expectations $E[\cdot]$ on both sides, we obtain
	\begin{align*}
		&E[|\widehat{\epsilon}^{Y_i}|^2] + \Delta t \widetilde{C}E[|\widehat{\epsilon}^{Z_i}|^2]\leq \left(\frac{1+\widehat{C} \Delta t}{1-\widehat{C} \Delta t}\right)^{N_T-i}E[|\widehat{\epsilon}^{{Y}_{N_T}}|^2]\\
		&~~~~~~~~~~~~~~~+\sum_{j=i+1}^{N_T}\left(\frac{1}{1-\widehat{C} \Delta t}\right)^{j-i}\left(\left(\frac{3+3\gamma \Delta t}{\gamma \Delta t} \right)|R_{\mbox{xgb}}^{Y_j}|^2+\frac{\Delta t}{2d}|R_{\mbox{xgb}}^{Z_{j}}|^2\right)\\
		&~~~~~~~~~~~\leq \exp(\widehat{C}T) \sum_{j=i+1}^{N_T}\left(\left(\frac{3+3\gamma \Delta t}{\gamma \Delta t} \right)|R_{\mbox{xgb}}^{Y_j}|^2+\frac{\Delta t}{2d}|R_{\mbox{xgb}}^{Z_{j}}|^2\right).
	\end{align*}
	Based on \eqref{eq:zyerrors}, we define
	\begin{equation*}
		Var_j^{\mathcal{Z}}:= (2Var^{z}_i+2\widehat{\mathcal{L}}_{\mbox{min}}(\widehat{q}^z_i))^2~\mbox{and}~Var_j^{\mathcal{Y}}:= (2Var^{y}_i+2\widehat{\mathcal{L}}_{\mbox{min}}(\widehat{q}^{y}_i))^2,
	\end{equation*}
	and use these quantities to rewrite the inequality as
	\begin{align*}
		E[|\widehat{\epsilon}^{Y_i}|^2] + \Delta t \widetilde{C}E[|\widehat{\epsilon}^{Z_i}|^2]\leq C\sum_{j=i+1}^{N_T} \left( \frac{N_T}{T} Var_j^\mathcal{Y}+\frac{T}{N_T} Var_j^\mathcal{Z}\right),
	\end{align*}
	where $C$ depends on $L, \gamma, d,$ and $T.$ Finally, incorporating the known discretization errors defined in \eqref{eq:diserror1} and \eqref{eq:diserror2}, we complete the proof.
\end{proof}
\section{Applications of RBSDE approach to SCLV models}\label{sec:application}
As examples, we consider the classical SABR and Heston stochastic local volatility (HSLV) models extended by incorporating a stochastic correlation structure, which belong to our SCLV model class defined in \eqref{eq:sec2_10}.
\subsection{Extension of the classical SABR model}\label{sec:extensionSABR}
The classical SABR model reads \cite{Hagan2002}
\begin{equation}\label{eq:sec5_1}
	\left\{
	\begin{array}{l}
		d\nu_t = \sigma_\nu\nu_t\,dW_{t}^{\nu},\\
		dS_t = \nu_t S_t^{\beta}\,dW_{t}^{s},
	\end{array}\right.
\end{equation}
where $W_{t}^{\nu}$ and $W_{t}^{s}$ are allowed to be correlated with a constant correlation coefficient, $\nu_0,\sigma_\nu>0$ and $0\leq\beta<1.$ We extend \eqref{eq:sec5_1} by including stochastic correlation driven by \cite{Teng2016a}
\begin{equation}\label{eq:sec5_2}
	d\rho_t = \kappa_\rho(\mu_\rho-\rho_t)(1-\rho_t^2)\,dt+\sigma_\rho(1-\rho_t^2)\,dW_t^{\rho},
\end{equation}
see also Table \ref{tab:sec2_1}. The extended SABR model can be derived as
\begin{equation}\label{eq:sec5_3}
	\left\{
	\begin{aligned}
		d\rho_t &= \kappa_\rho(\mu_\rho-\rho_t)(1-\rho_t^2)\,dt + \sigma_\rho(1-\rho_t^2)\,d\tbm{t}{\rho},\\
		d\nu_t &= \sigma_\nu\nu_t\left(\rho_{\rho \nu}d\tbm{t}{\rho}+\sqrt{1-\rho_{\rho \nu}^2}d\tbm{t}{\nu}\right),\\
		dS_t &= \nu_tS_t^{\beta}\left(\rho_{\rho \nu}\rho_t\,d\tbm{t}{\rho}+\sqrt{1-\rho_{\rho \nu}^2}\rho_t\,d\tbm{t}{\nu}+\sqrt{1-\rho_t^2}\,d\tbm{t}{s}\right),
	\end{aligned}
	\right.
\end{equation}
which belongs to the class of SCLV models presented in \eqref{eq:sec2_10}, and is referred to as the Stochastic Alpha Beta Stochastic Rho (SABSR) model in this work.

\begin{coro}\label{co:sec5_2}
	Suppose that an option with payoff function $g(S_t)$ is valuated in \eqref{eq:sec5_3}, which can be exercised at any time before the maturity $T,$ the corresponding pricing RBSDE can be derived as
	\begin{equation}\label{eq:sec5_6}
		\left\{
		\begin{array}{l}
			dX_t :=\\
			\begin{pmatrix}
				d\rho_t \\
				d\nu_t \\
				dS_t
			\end{pmatrix}=\begin{pmatrix}
				\kappa_\rho(\mu_\rho-\rho_t)(1-\rho_t^2) \\
				0 \\
				0
			\end{pmatrix}\,dt+ \begin{pmatrix}
				\sigma_\rho(1-\rho_t^2) &0 &0\\
				\rho_{\rho\nu}\sigma_\nu\nu_t&\sqrt{1-\rho_{\rho\nu}^2}\sigma_\nu\nu_t&0 \\
				\rho_{\rho\nu}\nu_tS^\beta_t\rho_t&\sqrt{1-\rho_{\rho\nu}^2}\nu_tS^\beta_t\rho_t&\nu_tS_t^\beta\sqrt{1-\rho_t^2}
			\end{pmatrix}\,dW_t,\\
			-dY_t = f(t, X_t, Y_t, Z_t)\,dt - Z_t\,dW_t + dL_t,\\
			\quad Y_T=g(S_T), Y_t \geq g(S_t)~\mbox{for}~t<T~\mbox{and}~\int_{0}^{T}(Y_t-g(X_t))\,dL_t=0,
		\end{array}\right.
	\end{equation}
	with
	\begin{equation}\label{eq:sec5_7}
		\begin{aligned}
			f(t,X_t,Y_t,Z_t)=&-rY_t-\frac{\lambda_\rho\rho_t}{\sigma_\rho(1-\rho_t^2)}Z_{1,t}-\left(\frac{\lambda_\nu}{\sigma_\nu\sqrt{1-\rho_{\rho\nu}^2}}-\frac{\lambda_\rho\rho_t\rho_{\rho\nu}}{\sigma_\rho(1-\rho_t^2)\sqrt{1-\rho_{\rho\nu}^2}}\right)Z_{2,t}\\
			&+\left(\frac{r}{\nu_tS_t^{\beta-1}\sqrt{1-\rho_t^2}}+\frac{\lambda_\nu\rho_t}{\sigma_\nu\sqrt{1-\rho_t^2}}\right)Z_{3,t}.
		\end{aligned}
	\end{equation}
	where $W_t=\left(\bW{t}{\rho},\bW{t}{\nu},\bW{t}{s}\right)^\top.$
	\begin{proof}
		The proof can be executed straightforwardly done by setting $\ra(t,\rho_{t})=\kappa_\rho(\mu_\rho-\rho_t)(1-\rho_t^2), \rb(t,\rho_t)=\sigma_\rho(1-\rho_t^2), \na(t,\nu_t)=0,\nb(t,\nu_t)=\sigma_\nu\nu_t,\sa(t,S_t)=0,\alpha(\nu_t)=\nu_t$ and $\beta(S_t)=S_t^\beta$ into the RBSDE given in \eqref{eq:sec3_20}.
	\end{proof}
\end{coro}
\subsection{Extension of the Heston SLV model}\label{sec:extensionHestinSLV}
The HSLV model proposed in \cite{VanderStoep2014} is given by
\begin{equation}\label{eq:sec5_8}
	\left\{
	\begin{array}{l}
		d\nu_t = \kappa_\nu(\mu_\nu-\nu_t)\,dt+\sigma_\nu\sqrt{\nu_t}\,dW_{t}^{\nu},\\
		dS_t = \mu_sS_t\,dt+\sqrt{\nu_t} S_t^{\beta}\,dW_{t}^{s},
	\end{array}\right.
\end{equation}
where $W_{t}^{\nu}$ and $W_{t}^{s}$ are allowed to be correlated with a constant correlation coefficient, $\nu_0, \kappa_\nu,\mu_\nu,\sigma_\nu>0,$ $2\kappa_\nu\mu_\nu>\sigma_\nu^2$ and $0\leq \beta<1.$ Incorporating stochastic correlation allows us to extend \eqref{eq:sec5_8} into a SCLV model, presented as
\begin{equation}\label{eq:sec5_9}
	\left\{
	\begin{aligned}
		d\rho_t &= \kappa_\rho(\mu_\rho-\rho_t)(1-\rho_t^2)\,dt + \sigma_\rho(1-\rho_t^2)\,d\tbm{t}{\rho},\\
		d\nu_t &= \kappa_\nu(\mu_\nu-\nu_t)\,dt+\sigma_\nu\sqrt{\nu_t}\left(\rho_{\rho \nu}d\tbm{t}{\rho}+\sqrt{1-\rho_{\rho \nu}^2}d\tbm{t}{\nu}\right),\\
		dS_t &= \mu_sS_t\,dt+\sqrt{\nu_t}S_t^{\beta}\left(\rho_{\rho \nu}\rho_t\,d\tbm{t}{\rho}+\sqrt{1-\rho_{\rho \nu}^2}\rho_t\,d\tbm{t}{\nu}+\sqrt{1-\rho_t^2}\,d\tbm{t}{s}\right),
	\end{aligned}
	\right.
\end{equation}
which is referred to as the Heston stochastic correlation local volatility (HSCLV) model.

\begin{coro}
	Suppose that an option with payoff function $g(S_t)$ is valuated in \eqref{eq:sec5_9}, which can be exercised at any time before the maturity $T,$ the corresponding pricing RBSDE can be derived as
	\begin{equation}\label{eq:sec5_12}
		\left\{
		\begin{array}{l}
			dX_t :=\\
			\begin{pmatrix}
				d\rho_t \\
				d\nu_t \\
				dS_t
			\end{pmatrix}=\begin{pmatrix}
				\kappa_\rho(\mu_\rho-\rho_t)(1-\rho_t^2) \\
				\kappa_\nu(\mu_\nu-\nu_t) \\
				\mu_sS_t
			\end{pmatrix}\,dt+ \begin{pmatrix}
				\sigma_\rho(1-\rho_t^2) &0 &0\\
				\rho_{\rho\nu}\sigma_\nu\nu_t&\sqrt{1-\rho_{\rho\nu}^2}\sigma_\nu\nu_t&0 \\
				\rho_{\rho\nu}\nu_tS^\beta_t\rho_t&\sqrt{1-\rho_{\rho\nu}^2}\nu_tS^\beta_t\rho_t&\nu_tS_t^\beta\sqrt{1-\rho_t^2}
			\end{pmatrix}\,dW_t,\\
			-dY_t = f(t, X_t, Y_t, Z_t)\,dt - Z_t\,dW_t + dL_t,\\
			\quad Y_T=g(S_T), Y_t \geq g(S_t)~\mbox{for}~t<T~\mbox{and}~\int_{0}^{T}(Y_t-g(X_t))\,dL_t=0,
		\end{array}\right.
	\end{equation}
	with
	\begin{equation}\label{eq:sec5_13}
		\begin{aligned}
			f(t,X_t,Y_t,Z_t)=&-rY_t-\frac{\lambda_\rho\rho_t}{\sigma_\rho(1-\rho_t^2)}Z_{1,t}-\left(\frac{\lambda_\nu\sqrt{\nu_t}}{\sigma_\nu\sqrt{1-\rho_{\rho\nu}^2}}-\frac{\lambda_\rho\rho_t\rho_{\rho\nu}}{\sigma_\rho(1-\rho_t^2)\sqrt{1-\rho_{\rho\nu}^2}}\right)Z_{2,t}\\
			&-\left(\frac{\mu_s-r}{\sqrt{\nu_t}S_t^{\beta-1}\sqrt{1-\rho_t^2}}-\frac{\lambda_\nu\sqrt{\nu_t}\rho_t}{\sigma_\nu\sqrt{1-\rho_t^2}}\right)Z_{3,t}.
		\end{aligned}
	\end{equation}
	where $W_t=\left(\bW{t}{\rho},\bW{t}{\nu},\bW{t}{s}\right)^\top.$
	\begin{proof}
		The proof can be executed straightforwardly done by setting $\ra(t,\rho_{t})=\kappa_\rho(\mu_\rho-\rho_t)(1-\rho_t^2), \rb(t,\rho_t)=\sigma_\rho(1-\rho_t^2), \na(t,\nu_t)=	\kappa_\nu(\mu_\nu-\nu_t),\nb(t,\nu_t)=\sigma_\nu\sqrt{\nu_t},\sa(t,S_t)=\mu_sS_t,\alpha(\nu_t)=\sqrt{\nu_t}$ and $\beta(S_t)=S_t^\beta$ into the RBSDE given in \eqref{eq:sec3_20}.
	\end{proof}
\end{coro}

\section{Numerical experiments}\label{sec:num}
In this section, we present numerical results pertaining to the pricing of American options contracts within the SCLV models. The intention is to illustrate the applicability of the proposed RBSDE approach and its efficiency, and to see the role of stochastic correlation on pricing and hedging. In all numerical experiments, we employ the following XGBoost parameter settings unless otherwise specified: the maximum tree depth is set to $\widetilde{d}=2;$ the learning rate to $\eta=0.9$ (with a low tree depth and a large learning rate chosen to expedite computation when possible), datasets of size $M$ are split into training and test sets with a ration of $3:1,$ the number of trees is set to be $150,$ and the ``Early Stopping" technique is applied with \texttt{early\_stopping\_rounds} set to $1.$ For all remaining parameters, default values are employed. We perform $10$ Picard iterations to solve the implicit equation for $Y$-component.

As indicated in \cite{Teng2022}, comparable accuracy levels can be attained across a range of standard learning rates (i.e., between $0$ and $1$) by adjusting the number of trees for general problems. Nevertheless, challenging problems necessitate a smaller learning rate. In \cite{Teng2022}, the number of trees is determined through manual comparison of training and test curves, and this value is applied consistently across all time steps. In contrast to this methodology, we employ ``Early Stopping" to automatically determine the specific number of trees at each time step.

In the case of a standard BSDE (i.e., $dX_t=dW_t$), there is an absence of forward discretization error. Moreover, our SCLV models are formulated using decoupled forward BSDEs, where the underlying processes $X_t$ are forward SDEs. These forward processes need to be numerically simulated, e.g., with the Euler-Maruyama scheme, when their analytical solutions are unavailable.

The algorithms are executed independently ten times, and the average value is computed for approximation. Different seeds are used for each simulation. We denote the approximations with numerical scheme by $(Y_0^{\Delta_t},\mathbf{Z}_0^{\Delta_t}),$ where $\mathbf{Z}_0^{\Delta_t}=(Z_{1,0}^{\Delta_t},Z_{2,0}^{\Delta_t},\cdots,Z_{d,0}^{\Delta_t}),$ and the approximations of delta-hedgings $\frac{\partial V}{\partial S}$ by $\mathit{\Delta}_{d,0}^{\Delta_t}.$ The approximated prices $\widehat{Y}_0^{\Delta_t}:=\frac{1}{10}\sum_{k=1}^{10}Y_{0,k}^{\Delta_t}$ and delta-hedgings $\widehat{\mathit{\Delta}}_{d,0}^{\Delta_t}=\widehat{Z}^{\Delta_t}_{d,0}/\alpha(\nu_0)\beta(S_0)\sqrt{1-\rho_0^2}$ with $\widehat{Z}_{d,0}^{\Delta_t}:=\frac{1}{10}\sum_{k=1}^{10}Z_{d,0,k}^{\Delta_t}$ will be reported for all the examples, as well as their empirical standard deviations $\sqrt{\frac{1}{9}\sum_{k=1}^{10}\left|\widehat{Y}_0^{\Delta_t}-Y_{0,k}^{\Delta_t}\right|^2}$ and $\sqrt{\frac{1}{9}\sum_{k=1}^{10}\left|\widehat{\mathit{\Delta}}_{d,0}^{\Delta_t}-\mathit{\Delta}_{d,0,k}^{\Delta_t}\right|^2},$ where $\mathit{\Delta}_{d,0,k}^{\Delta_t}=Z^{\Delta_t}_{d,0,k}/\alpha(\nu_0)\beta(S_0)\sqrt{1-\rho_0^2}.$
The dynamic of underlying asset price, denoted by $dS_t,$ are specified as the $d$-th (last) forward component in our pricing models. Consequently, delta-hedging consistently carries the subscript $d.$ Although our approximations of $\widehat{Z}_{1,0},\widehat{Z}_{2,0}\cdots,\widehat{Z}_{d-1,0}$ are also accurate, obtaining sufficiently precise approximations for other hedgings, such as $\frac{\partial V}{\partial \nu}$ and $\frac{\partial V}{\partial \rho},$ poses challenges due to ill-conditioned systems of equations, as illustrated in \eqref{eq:sec3_21}. As a result, we limit our discussion only to delta-hedgings in this paper, and deferring the analysis and treatment of these ill-conditioned systems to future research. All experiments are conducted in Python with an Intel(R) Core(TM) i$5-8500$ CPU @ $3.00$GHz and $15$ GB RAM.
\subsection{Numerical Example: American options with the SABR and SABSR models} In this section, we assess our RBSDE scheme by pricing American options under the classical SABR model defined in \eqref{eq:sec5_1}, as well as under the SABSR model - an extension of the SABR framework that incorporates stochastic correlation as specified in \eqref{eq:sec5_3}.
\subsubsection{American options with the SABR model}\label{sec:aocc}
We begin with the classical SABR model, whose corresponding pricing RBSDE can be readily obtained from the SABSR pricing RBSDE in \eqref{eq:sec5_6} by replacing the stochastic correlation with a constant correlation. For the SABR model, several effective numerical methods are available, including the exact simulation scheme of \cite{Cai2017} and the CTMC method of \cite{Cui2018}. In particular, the CTMC method yields highly accurate approximations with relatively low computational cost. To assess the accuracy of our RBSDE scheme, we employ the parameter values reported in Table 4 of Sec.4.2.3 in \cite{Cui2018}. Following \cite{Gobet2005}, we also apply a logarithmic transformation to $S_t$ to mitigate the insufficient smoothness of the payoff function. The convergence results are presented in Table \ref{tab:sec6_7}, with $N_T$ set to $100, 200$ and $400.$
\begin{table}[h!]
	\centering
	\caption{American put options. Option parameters: $S_0=1.1, T=1, \beta=0.7, \nu_0=0.2, \sigma_\nu=0.08, r=0.$ Constant correlation: $-0.4.$ RBSDE parameters: $M=100000.$ CTMC row is reported in \cite{Cui2018}.}\label{tab:sec6_7}
	\resizebox{\columnwidth}{!}{
		\begin{tabular}{c|cccccccccccccc}
			\hline
			\hline
			&&\multicolumn{11}{c}{$K$}\\
			\cline{3-13}
			$N_T$&&$1.00$&$1.02$&$1.04$&$1.06$&$1.08$&$1.10$&$1.12$&$1.14$&$1.16$&$1.18$&$1.20$\\
			\cline{2-13}
			&CTMC&$\mathbf{0.0424}$&$\mathbf{0.0494}$&$\mathbf{0.0572}$&$\mathbf{0.0658}$&$\mathbf{0.0751}$&$\mathbf{0.0851}$&$\mathbf{0.0959}$&$\mathbf{0.1074}$&$\mathbf{0.1196}$&$\mathbf{0.1326}$&$\mathbf{0.1461}$\\
			\cline{1-13}
			\multirow{5}{*}{$100$}&$\widehat{Y}_0$&$0.0403$&$0.0470$&$0.0547$&$0.0630$&$0.0721$&$0.0821$&$0.0929$&$0.1044$&$0.1167$&$0.1297$&$0.1435$\\
			&(Std.Dev.)&$2.91\mbox{E-}{04}$&$3.37\mbox{E-}{04}$&$3.83\mbox{E-}{04}$&$3.89\mbox{E-}{04}$&$4.09\mbox{E-}{04}$&$4.28\mbox{E-}{04}$&$4.19\mbox{E-}{04}$&$4.30\mbox{E-}{04}$&$4.55\mbox{E-}{04}$&$3.98\mbox{E-}{04}$&$4.75\mbox{E-}{04}$\\
			\cline{2-13}
			&$\widehat{\mathit{\Delta}}_{3,0}^{\Delta_t}$&$-0.2564$&$-0.2892$&$-0.3267$&$-0.3619$&$-0.3971$&$-0.4387$&$-0.4765$&$-0.5154$&$-0.5540$&$-0.5921$&$-0.6241$\\
			&(Std. Dev.)&$3.99\mbox{E-}{03}$&$6.84\mbox{E-}{03}$&$7.57\mbox{E-}{03}$&$7.27\mbox{E-}{03}$&$7.94\mbox{E-}{03}$&$9.50\mbox{E-}{03}$&$8.31\mbox{E-}{03}$&$1.15\mbox{E-}{02}$&$1.00\mbox{E-}{02}$&$1.32\mbox{E-}{02}$&$1.29\mbox{E-}{02}$\\
			\cline{2-13}
			& Average running time &\multicolumn{11}{c}{$23$ seconds}\\
			\cline{1-13}
			\multirow{5}{*}{$200$}&$\widehat{Y}_0$&$0.0413$&$0.0482$&$0.0559$&$0.0643$&$0.0735$&$0.0835$&$0.0943$&$0.1059$&$0.1181$&$0.1313$&$0.1450$\\
			&(Std.Dev.)&$1.84\mbox{E-}{04}$&$2.24\mbox{E-}{04}$&$2.27\mbox{E-}{04}$&$2.49\mbox{E-}{04}$&$3.13\mbox{E-}{04}$&$2.72\mbox{E-}{04}$&$3.78\mbox{E-}{04}$&$3.70\mbox{E-}{04}$&$3.63\mbox{E-}{04}$&$3.29\mbox{E-}{04}$&$3.88\mbox{E-}{04}$\\
			\cline{2-13}
			&$\widehat{\mathit{\Delta}}_{3,0}^{\Delta_t}$&$-0.2376$&$-0.2683$&$-0.3038$&$-0.3390$&$-0.3742$&$-0.4097$&$-0.4484$&$-0.4847$&$-0.5204$&$-0.5581$&$-0.5923$\\
			&(Std. Dev.)&$9.25\mbox{E-}{03}$&$1.21\mbox{E-}{02}$&$1.20\mbox{E-}{02}$&$1.59\mbox{E-}{02}$&$1.72\mbox{E-}{02}$&$1.96\mbox{E-}{02}$&$2.32\mbox{E-}{02}$&$2.62\mbox{E-}{02}$&$2.89\mbox{E-}{02}$&$2.90\mbox{E-}{02}$&$3.36\mbox{E-}{02}$\\
			\cline{2-13}
			& Average running time &\multicolumn{11}{c}{$45$ seconds}\\
			\cline{1-13}
			\multirow{5}{*}{$400$}&$\widehat{Y}_0$&$\mathbf{0.0425}$&$\mathbf{0.0496}$&$\mathbf{0.0573}$&$\mathbf{0.0657}$&$\mathbf{0.0750}$&$\mathbf{0.0850}$&$\mathbf{0.0958}$&$\mathbf{0.1073}$&$\mathbf{0.1195}$&$\mathbf{0.1326}$&$\mathbf{0.1464}$\\
			&(Std.Dev.)&$2.87\mbox{E-}{04}$&$2.67\mbox{E-}{04}$&$3.29\mbox{E-}{04}$&$2.83\mbox{E-}{04}$&$3.92\mbox{E-}{04}$&$3.66\mbox{E-}{04}$&$3.80\mbox{E-}{04}$&$3.65\mbox{E-}{04}$&$3.71\mbox{E-}{04}$&$3.81\mbox{E-}{04}$&$2.76\mbox{E-}{04}$\\
			\cline{2-13}
			&$\widehat{\mathit{\Delta}}_{3,0}^{\Delta_t}$&$-0.2214$&$-0.2507$&$-0.2868$&$-0.3228$&$-0.3605$&$-0.3981$&$-0.4345$&$-0.4752$&$-0.5132$&$-0.5502$&$-0.5846$\\
			&(Std. Dev.)&$7.48\mbox{E-}{03}$&$9.34\mbox{E-}{03}$&$1.00\mbox{E-}{02}$&$1.08\mbox{E-}{02}$&$1.24\mbox{E-}{02}$&$1.35\mbox{E-}{02}$&$1.66\mbox{E-}{02}$&$1.79\mbox{E-}{02}$&$2.27\mbox{E-}{02}$&$2.22\mbox{E-}{02}$&$2.60\mbox{E-}{02}$\\
			\cline{2-13}
			& Average running time &\multicolumn{11}{c}{$126$ seconds}\\
			\cline{1-13}
			\hline
	\end{tabular}}
\end{table}

We observe that the results are overall quite promising, particularly as the approximated prices already attain a level of accuracy comparable to the CTMC approximation in \cite{Cui2018} when $N_T=100.$ As mentioned before, the reported approximations of delta-hedging $\frac{\partial V}{\partial S}$ in Table \ref{tab:sec6_7} are computed with $\widehat{Z}_{d,0}/\nu_0S_0^{\beta}.$ In this example, we set $M=100000$ to expedite computations. The mean execution time for each run in our approach varies with the values of $N_T,$ but remains nearly same for repeated runs with the same $N_T$. Table \ref{tab:sec6_7} also reports the approximate execution times corresponding to each value of $N_T.$ The comparison to the benchmark demonstrates that the proposed RBSDE scheme can attain a quite high accuracy with stable convergence.

\subsubsection{American options under the SABSR model}
In this section we evaluate American options in the SABSR model, and analyze the role of stochastic correlation in pricing and hedging. Examining the influence of stochastic correlation on American options pricing is both significant interesting and novel in the literature. Specially, we set $\rho_0=-0.4, \mu_\rho=-0.4, \sigma_{\rho}=0.5, \rho_{\rho \nu}=0,$ ensuring that the SCP initiates at $-0.4$ and maintains a long-term mean of $-0.4.$ The impact of stochastic correlation is demonstrated by comparing prices and delta-hedgings in Table \ref{tab:sec6_8}. These variances arise solely from differing values of $\kappa_{\rho},$ which dictates the speed at which the SCP reverts to its long-term mean of $-0.4.$ Noting that the CTMC method cannot be directly applied to the SABSR model, which requires substantial reformulation - the existing literature offers very few highly accurate classical numerical approaches. Recently, many deep learning-based algorithms have been developed for pricing American options, and some of these can be adopted for valuation under the SABSR framework. In particular, the deep primal-dual BSDE (DPDB) algorithm proposed in the recent work \cite{Yang2024b} provides both lower and upper bounds for American option prices, accompanied by confidence intervals, making it a suitable benchmark for our study. In Table \ref{tab:sec6_8}, we also report the DPDB approximations together with the corresponding $95\%$ confidence intervals for the lower and upper bounds.
\begin{table}[h!]
	\centering
	\caption{American put options. Option parameters: $S_0=1.1, T=1, \beta=0.7, \nu_0=0.2, \sigma_\nu=0.08, r=0.$ SCP: $\rho_0=-0.4, \mu_\rho=-0.4, \sigma_{\rho}=0.5, \rho_{\rho \nu}=0.$ RBSDE parameters: $N_T=400, M=100000.$ The average running time of our scheme is approximately $210$ seconds for the given parameter setting, and the training time of DPDB scheme is about $637$ seconds.}\label{tab:sec6_8}
	\scalebox{1}[1.3]{
		\resizebox{\columnwidth}{!}{
			\begin{tabular}{c|ccccccccccccc}
				\hline
				\hline
				&&\multicolumn{11}{c}{$K$}\\
				\cline{3-13}
				$\kappa_\rho$&&$1.00$&$1.02$&$1.04$&$1.06$&$1.08$&$1.10$&$1.12$&$1.14$&$1.16$&$1.18$&$1.20$\\
				\cline{1-13}
				\multirow{6}{*}{$0.1$}&$\widehat{Y}_0$&$\mathbf{0.0440}$&$\mathbf{0.0510}$&$\mathbf{0.0587}$&$\mathbf{0.0671}$&$\mathbf{0.0763}$&$\mathbf{0.0862}$&$\mathbf{0.0969}$&$\mathbf{0.1084}$&$\mathbf{0.1207}$&$\mathbf{0.1337}$&$\mathbf{0.1473}$\\
				&(Std.Dev.)&$3.08\mbox{E-}{04}$&$2.78\mbox{E-}{04}$&$2.16\mbox{E-}{04}$&$2.84\mbox{E-}{04}$&$2.28\mbox{E-}{04}$&$4.41\mbox{E-}{04}$&$3.42\mbox{E-}{04}$&$2.58\mbox{E-}{04}$&$3.73\mbox{E-}{04}$&$3.67\mbox{E-}{04}$&$4.13\mbox{E-}{04}$\\
				\cline{2-13}&DPDB&$0.0419$&$0.0488$&$0.0569$&$0.0653$&$0.0745$&$0.0842$&$0.0953$&$0.1064$&$0.1187$&$0.1317$&$0.1447$\\
				&$95\%$ CI bounds&$\mathbf{[0.0418,0.0485]}$&$\mathbf{[0.0488,0.0551]}$&$\mathbf{[0.0569,0.0635]}$&$\mathbf{[0.0652,0.0719]}$&$\mathbf{[0.0744,0.0816]}$&$\mathbf{[0.0841,0.0920]}$&$\mathbf{[0.0952,0.1028]}$&$\mathbf{[0.1063,0.1142]}$&$\mathbf{[0.1187,0.1269]}$&$\mathbf{[0.1315,0.1407]}$&$\mathbf{[0.1445,0.1538]}$\\
				\cline{2-13}
				&$\widehat{\mathit{\Delta}}_{3,0}^{\Delta_t}$&$-0.2681$&$-0.3012$&$-0.3345$&$-0.3692$&$-0.4044$&$-0.4389$&$-0.4735$&$-0.5113$&$-0.5398$&$-0.5761$&$-0.6077$\\
				&(Std.Dev.)&$1.50\mbox{E-}{02}$&$1.71\mbox{E-}{02}$&$2.05\mbox{E-}{02}$&$2.46\mbox{E-}{02}$&$2.64\mbox{E-}{02}$&$3.36\mbox{E-}{02}$&$3.63\mbox{E-}{02}$&$3.72\mbox{E-}{02}$&$4.32\mbox{E-}{02}$&$4.85\mbox{E-}{02}$&$5.35\mbox{E-}{02}$\\                 		   \cline{1-13}
				\multirow{6}{*}{$2.0$}&$\widehat{Y}_0$&$\mathbf{0.0441}$&$\mathbf{0.0509}$&$\mathbf{0.0585}$&$\mathbf{0.0669}$&$\mathbf{0.0762}$&$\mathbf{0.0861}$&$\mathbf{0.0967}$&$\mathbf{0.1083}$&$\mathbf{0.1205}$&$\mathbf{0.1335}$&$\mathbf{0.1472}$\\
				&(Std.Dev.)&$3.05\mbox{E-}{04}$&$2.56\mbox{E-}{04}$&$3.09\mbox{E-}{04}$&$2.78\mbox{E-}{04}$&$3.63\mbox{E-}{04}$&$3.43\mbox{E-}{04}$&$2.71\mbox{E-}{04}$&$3.30\mbox{E-}{04}$&$3.86\mbox{E-}{04}$&$4.58\mbox{E-}{04}$&$5.46\mbox{E-}{04}$\\
				\cline{2-13}&DBDP&$0.0422$&$0.0491$&$0.0568$&$0.0651$&$0.0739$&$0.0844$&$0.0952$&$0.1064$&$0.1186$&$0.1312$&$0.1449$\\
				&$95\%$ CI bounds&$\mathbf{[0.0421,0.0461]}$&$\mathbf{[0.0491,0.0529]}$&$\mathbf{[0.0568,0.0608]}$&$\mathbf{[0.0650,0.0700]}$&$\mathbf{[0.0738,0.0794]}$&$\mathbf{[0.0841,0.0894]}$&$\mathbf{[0.0952,0.1005]}$&$\mathbf{[0.1064,0.1119]}$&$\mathbf{[0.1184,0.1240]}$&$\mathbf{[0.1313,0.1372]}$&$\mathbf{[0.1448,0.1514]}$\\
				\cline{2-13}
				&$\widehat{\mathit{\Delta}}_{3,0}^{\Delta_t}$&$-0.2687$&$-0.2991$&$-0.3328$&$-0.3660$&$-0.4018$&$-0.4384$&$-0.4725$&$-0.5063$&$-0.5410$&$-0.5732$&$-0.6049$\\
				&(Std.Dev.)&$1.57\mbox{E-}{02}$&$1.80\mbox{E-}{02}$&$1.96\mbox{E-}{02}$&$2.46\mbox{E-}{02}$&$2.64\mbox{E-}{02}$&$3.28\mbox{E-}{02}$&$3.58\mbox{E-}{02}$&$4.17\mbox{E-}{02}$&$4.38\mbox{E-}{02}$&$5.01\mbox{E-}{02}$&$5.49\mbox{E-}{02}$\\
				\cline{1-13}
				\multirow{6}{*}{$20$}&$\widehat{Y}_0$&$\mathbf{0.0433}$&$\mathbf{0.0501}$&$\mathbf{0.0577}$&$\mathbf{0.0660}$&$\mathbf{0.0752}$&$\mathbf{0.0851}$&$\mathbf{0.0959}$&$\mathbf{0.1074}$&$\mathbf{0.1195}$&$\mathbf{0.1326}$&$\mathbf{0.1463}$\\
				&(Std.Dev.)&$3.48\mbox{E-}{04}$&$3.83\mbox{E-}{04}$&$4.97\mbox{E-}{04}$&$3.49\mbox{E-}{04}$&$3.77\mbox{E-}{04}$&$3.41\mbox{E-}{04}$&$3.50\mbox{E-}{04}$&$4.34\mbox{E-}{04}$&$4.15\mbox{E-}{04}$&$3.62\mbox{E-}{04}$&$3.82\mbox{E-}{04}$\\
				\cline{2-13}
				&DBDP&$0.0419$&$0.0490$&$0.0564$&$0.0652$&$0.0744$&$0.0844$&$0.0953$&$0.1061$&$0.1182$&$0.1307$&$0.1455$\\
				&$95\%$ CI bounds&$\mathbf{[0.0419,0.0450]}$&$\mathbf{[0.0489,0.0522]}$&$\mathbf{[0.0564,0.0598]}$&$\mathbf{[0.0652,0.0685]}$&$\mathbf{[0.0744,0.0775]}$&$\mathbf{[0.0843,0.0878]}$&$\mathbf{[0.0951,0.0990]}$&$\mathbf{[0.1058,0.1102]}$&$\mathbf{[0.1182,0.1217]}$&$\mathbf{[0.1305,0.1354]}$&$\mathbf{[0.1452,0.1491]}$\\
				\cline{2-13}
				&$\widehat{\mathit{\Delta}}_{3,0}^{\Delta_t}$&$-0.2624$&$-0.2945$&$-0.3286$&$-0.3624$&$-0.3975$&$-0.4342$&$-0.4691$&$-0.5046$&$-0.5407$&$-0.5753$&$-0.6031$\\
				&(Std. Dev.)&$1.50\mbox{E-}{02}$&$1.78\mbox{E-}{02}$&$1.91\mbox{E-}{02}$&$2.41\mbox{E-}{02}$&$2.83\mbox{E-}{02}$&$3.05\mbox{E-}{02}$&$3.52\mbox{E-}{02}$&$3.76\mbox{E-}{02}$&$4.21\mbox{E-}{02}$&$4.78\mbox{E-}{02}$&$5.43\mbox{E-}{02}$\\
				\hline
	\end{tabular}}}
\end{table}
The results of our RBSDE scheme within the SABSR model are quite promising; we observe that all of our approximations fall within the corresponding confidence intervals. Furthermore, our computational time (approximately $210$ seconds) is substantially lower than the training time of DPDB (approximately $637$ seconds). To elucidate the impact of stochastic correlation, Fig.\ref{fig:sec6_4} visually presents the disparities in price and delta-hedging between the results outlined in Table \ref{tab:sec6_7} and \ref{tab:sec6_8} for $N_T=400.$ Since the SCPs are parameterized to initialize and maintain a correlation value $-0.4,$ consistent with the constant correlation value utilized in Sec.\ref{sec:aocc}. The magnitude of differences in price and delta-hedging differences decreases with the increasing mean reversion speed $\kappa_{\rho},$ as displayed in Figure \ref{fig:sec6_4}. Consequently, the approximations with $\kappa_\rho=20$ (highest reversion speed) are expected to closely resemble those obtained with the constant correlation value, as depicted in Fig. \ref{fig:sec6_4}.
\begin{figure}[htbp!]
	\centering
	\begin{subfigure}[b]{0.4\textwidth}
		\includegraphics[width=\textwidth]{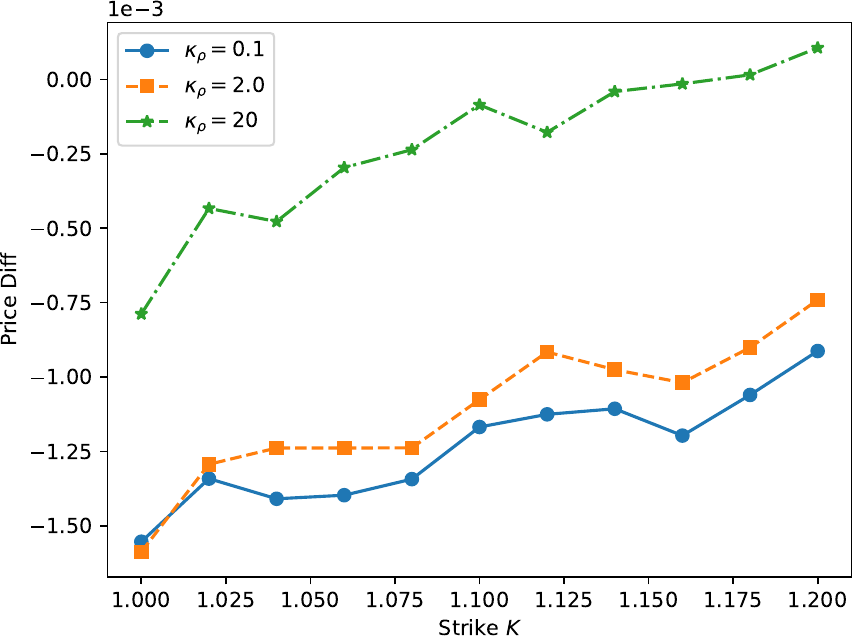}
		\subcaption{The price differences.}
	\end{subfigure}
	~
	\begin{subfigure}[b]{0.4\textwidth}
		\includegraphics[width=\textwidth]{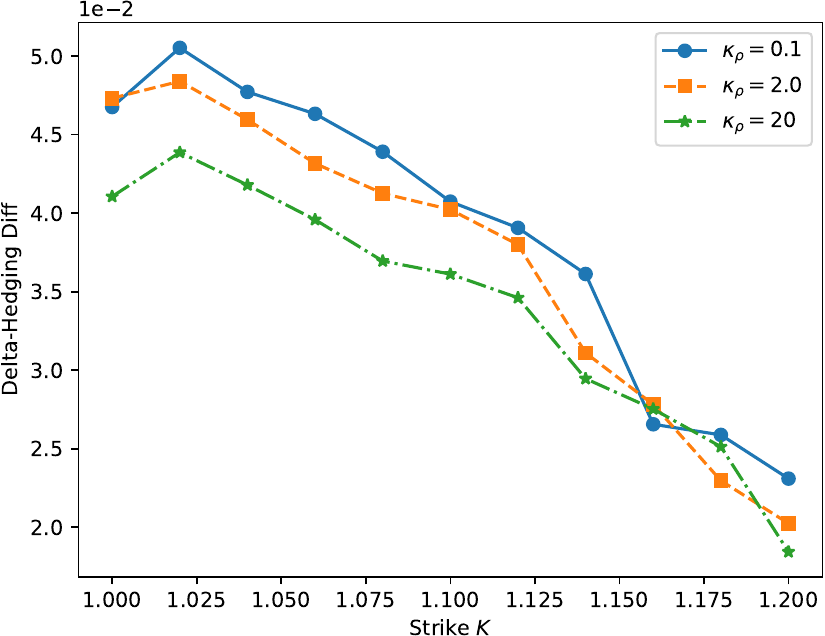}
		\subcaption{The delta-hedging differences.}
	\end{subfigure}
	\caption{The impact of the speed of mean reversion $\kappa_{\rho}$ on both the prices and delta-hedgings of American put options within the SABSR model.}\label{fig:sec6_4}
\end{figure}\\
The advantage of the RBSDE approach is that it allows us to capture the future evolution of the option price, which is particularly useful for tasks such as computing Value at Risk. To illustrate this, we display in Fig. \ref{fig:sec6_5a} the price evolution over the entire time domain by averaging the price data at each time point. For improved visualization, we set relatively large values for the spot and strike prices, namely $S_0 = 100$ and $K = 110$. All other parameters are chosen as in Table \ref{tab:sec6_8}, with $\kappa_\rho = 2$, except that we vary $\beta$. In addition, the distribution of the log-price data at $t = \tfrac{1}{2}$ is shown in Fig. \ref{fig:sec6_5b}.
\begin{figure}[htbp!]
	\centering
	\begin{subfigure}[b]{0.4\textwidth}
		\includegraphics[width=\textwidth]{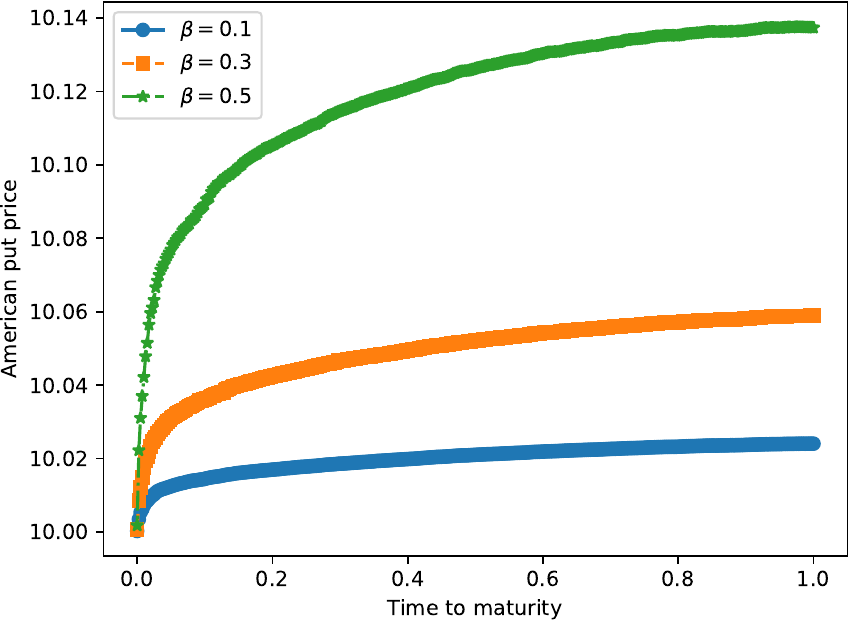}
		\subcaption{Price evolution.}\label{fig:sec6_5a}
	\end{subfigure}
	~
	\begin{subfigure}[b]{0.4\textwidth}
		\includegraphics[width=\textwidth]{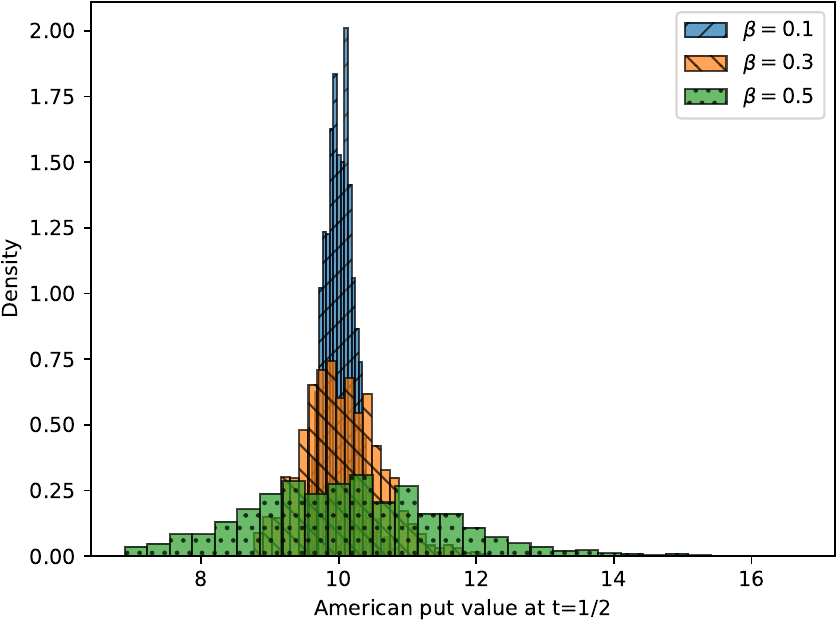}
		\subcaption{The log-price at $t=\frac{1}{2}.$}\label{fig:sec6_5b}
	\end{subfigure}
	\caption{Future evolution of the American put option price under the SABSR model for varying $\beta.$ $S_0=100, K=110,$ and all remaining parameter values follow those reported in Table \ref{tab:sec6_8}, with $\kappa_\rho = 2.$}
\end{figure}\\
\subsection{Numerical Example: Heston SLV Model with Stochastic Correlation}
The SCLV model examined in this section, as detailed in Sec.\ref{sec:extensionHestinSLV}, is an extension of the Heston SLV (HSLV) model that incorporates stochastic correlation, referred to as the HSCLV model. Different to the SABSR model, both the local volatility and underlying asset processes have nonzero drift terms in the HSCLV model.

\subsubsection{American options under the HSLV model}
We revisit the valuation of American put options under the HSLV model using our RBSDE-based approach. Once again, we compare our approximations with those obtained using the CTMC method, as presented in Table \ref{tab:sec6_11}.
\begin{table}[h!]
	\centering
	\caption{American put options. Option parameters: $S_0=10, T=1, \beta=0.7, \nu_0=0.04, \kappa_\nu=4, \mu_\nu=0.035, \sigma_\nu=0.15, r=0.03.$ Constant correlation is $\rho=-0.75.$ RBSDE parameters: $M=200000, \mu_s=0.03.$}\label{tab:sec6_11}
	\resizebox{\columnwidth}{!}{
		\begin{tabular}{c|ccccccccc}
			\hline
			&&\multicolumn{7}{c}{$K$}\\
			\cline{3-9}
			$N_T$&&$7.0$&$8.0$&$9.0$&$10.0$&$11.0$&$12.0$&$13.0$\\
			\hline
			\multirow{7}{*}{$10$}&CTMC&$\mathbf{0.0002}$&$\mathbf{0.0044}$&$\mathbf{0.0473}$&$\mathbf{0.2813}$&$\mathbf{0.9999}$&$\mathbf{1.9999}$&$\mathbf{2.9999}$\\
			\cline{2-9}
			&$\widehat{Y}_0$&$0.0002$&$0.0036$&$0.0444$&$0.2721$&$0.9999$&$1.9999$&$2.9999$\\
			&(Std.Dev.)&$(2.24\mbox{E-}{05})$&$(1.13\mbox{E-}{04})$&$(4.86\mbox{E-}{04})$&$(1.01\mbox{E-}{03})$&$(0.00\mbox{E-}{00})$&$(4.68\mbox{E-}{16})$&$(4.68\mbox{E-}{16})$\\
			\cline{2-9}
			&$\widehat{\mathit{\Delta}}_{3,0}^{\Delta_t}$&$-0.0001$&$-0.0073$&$-0.0859$&$-0.3939$&$-0.9445$&$-0.9852$&$-0.9832$\\
			&(Std. Dev.)&$(1.66\mbox{E-}{04})$&$(2.42\mbox{E-}{04})$&$(3.60\mbox{E-}{03})$&$(5.25\mbox{E-}{03})$&$(1.61\mbox{E-}{02})$&$(2.93\mbox{E-}{02})$&$(4.36\mbox{E-}{02})$\\
			\cline{2-9}
			& Average running time &\multicolumn{7}{c}{$6$ seconds}\\
			\hline
			\multirow{5}{*}{$20$}&$\widehat{Y}_0$&$0.0002$&$0.0041$&$0.0462$&$0.2776$&$0.9999$&$1.9999$&$2.9999$\\
			&(Std.Dev.)&$(1.77\mbox{E-}{05})$&$(7.63\mbox{E-}{05})$&$(2.88\mbox{E-}{04})$&$(1.05\mbox{E-}{03})$&$(0.00\mbox{E-}{00})$&$(4.68\mbox{E-}{16})$&$(4.68\mbox{E-}{16})$\\
			\cline{2-9}
			&$\widehat{\mathit{\Delta}}_{3,0}^{\Delta_t}$&$-0.0001$&$-0.0081$&$-0.0886$&$-0.3893$&$-0.9666$&$-0.9854$&$-0.9808$\\
			&(Std. Dev.)&$(7.27\mbox{E-}{05})$&$(1.71\mbox{E-}{04})$&$(3.27\mbox{E-}{03})$&$(6.60\mbox{E-}{03})$&$(1.36\mbox{E-}{02})$&$(2.78\mbox{E-}{02})$&$(4.19\mbox{E-}{02})$\\
			\cline{2-9}
			& Average running time &\multicolumn{7}{c}{$11$ seconds}\\
			\hline
			\multirow{4}{*}{$30$}&$\widehat{Y}_0$&$\mathbf{0.0002}$&$\mathbf{0.0046}$&$\mathbf{0.0474}$&$\mathbf{0.2811}$&$\mathbf{0.9999}$&$\mathbf{1.9999}$&$\mathbf{2.9999}$\\
			&(Std.Dev.)&$(4.41\mbox{E-}{05})$&$(1.07\mbox{E-}{04})$&$(6.13\mbox{E-}{04})$&$(1.32\mbox{E-}{03})$&$(0.00\mbox{E-}{00})$&$(4.68\mbox{E-}{16})$&$(4.68\mbox{E-}{16})$\\
			\cline{2-9}
			&$\widehat{\mathit{\Delta}}_{3,0}^{\Delta_t}$&$-0.0000$&$-0.0091$&$-0.0897$&$-0.3851$&$-0.9725$&$-0.9794$&$-0.9702$\\
			&(Std. Dev.)&$(3.99\mbox{E-}{05})$&$(1.93\mbox{E-}{04})$&$(2.96\mbox{E-}{03})$&$(5.75\mbox{E-}{03})$&$(1.81\mbox{E-}{02})$&$(3.42\mbox{E-}{02})$&$(5.14\mbox{E-}{02})$\\
			\cline{2-9}
			& Average running time &\multicolumn{7}{c}{$16$ seconds}\\
			\hline
	\end{tabular}}
\end{table}
The results further demonstrate the high accuracy of our RBSDE scheme for pricing within the HSLV framework.
\subsubsection{American options under the HSCLV model}
Finally, we examine the evaluation of American options using the HSCLV model. For the SCP, we set $\rho_0=-0.75,$ mean reversion speed $\kappa_\rho=2.0,$ volatility $\sigma_\rho=0.1,$ and correlation between the SCP and the volatility process $\rho_{\rho\nu}=0.$ We then vary the long-term mean $\mu_\rho.$ In other words, the SCP is initialized with the same values but is allowed to converge to different long-term mean values while maintaining the same mean-reversion speed. The results are reported in Table \ref{tab:sec6_12}, along with the corresponding DBDP approximations. The same conclusion can be drawn as in the SABSR example. In particular, for the case with $N_T=30,$ our computational time is approximately $35$ seconds, compared with $639$ seconds required for training DPDB scheme.
\begin{table}[h!]
	\centering
	\caption{American put options. Option parameters: $S_0=10, T=1, \beta=0.7, \nu_0=0.04, \kappa_\nu=4, \mu_\nu=0.035, \sigma_\nu=0.15, r=0.03.$ SCP: $\rho_0=-0.75, \kappa_\rho=2.0, \sigma_\rho=0.1, \rho_{\rho \nu}=0.$ RBSDE parameters: $M=200000, N_T=30, \mu_s=0.03.$ The average running time of our scheme is approximately $35$ seconds for the given parameter setting, and the training time of DPDB scheme is about $639$ seconds.}\label{tab:sec6_12}
	\resizebox{\columnwidth}{!}{
		\begin{tabular}{c|ccccccccc}
			\hline
			&&\multicolumn{7}{c}{$K$}\\
			\cline{3-9}
			$\mu_{\rho}$&&$7.0$&$8.0$&$9.0$&$10.0$&$11.0$&$12.0$&$13.0$\\
			\hline
			\multirow{6}{*}{$-0.2$}&$\widehat{Y}_0$&$0.0001$&$0.0040$&$0.0452$&$0.2794$&$0.9999$&$1.9999$&$2.9999$\\
			&(Std.Dev.)&$(1.78\mbox{E-}{05})$&$(1.53\mbox{E-}{04})$&$(5.59\mbox{E-}{04})$&$(1.09\mbox{E-}{03})$&$(0.00\mbox{E-}{00})$&$(4.68\mbox{E-}{16})$&$(4.68\mbox{E-}{16})$\\
			&DPDB&$0.0001$&$0.0035$&$0.0440$&$0.2776$&$0.9937$&$1.9928$&$2.9921$\\
			&$95\%$ CI bounds&$[0.0001,0.0011]$&$[0.0034,0.0062]$&$[0.0438,0.0475]$&$[0.2769,0.2814]$&$[0.9937,1.0002]$&$[1.9926,2.0000]$&$[2.9920,3.0001]$\\
			\cline{2-9}
			&$\widehat{\mathit{\Delta}}_{3,0}^{\Delta_t}$&$-0.0002$&$-0.0093$&$-0.0893$&$-0.3946$&$-0.9817$&$-0.9995$&$-0.9996$\\
			&(Std. Dev.)&$(5.69\mbox{E-}{05})$&$(4.48\mbox{E-}{04})$&$(2.89\mbox{E-}{03})$&$(5.97\mbox{E-}{03})$&$(2.06\mbox{E-}{02})$&$(3.76\mbox{E-}{02})$&$(5.52\mbox{E-}{02})$\\
			\hline
			\multirow{6}{*}{$0$}&$\widehat{Y}_0$&$0.0001$&$0.0037$&$0.0440$&$0.2786$&$0.9999$&$1.9999$&$2.9999$\\
			&(Std.Dev.)&$(1.94\mbox{E-}{05})$&$(1.25\mbox{E-}{04})$&$(4.10\mbox{E-}{04})$&$(1.04\mbox{E-}{03})$&$(0.00\mbox{E-}{00})$&$(4.68\mbox{E-}{16})$&$(4.68\mbox{E-}{16})$\\
			&DPDB&$0.0001$&$0.0031$&$0.0427$&$0.2767$&$0.9939$&$1.9929$&$2.9919$\\
			&$95\%$ CI bounds&$[0.0001,0.0010]$&$[0.0031,0.0058]$&$[0.0425,0.0462]$&$[0.2763,0.2802]$&$[0.9937,1.0002]$&$[1.9928,2.0000]$&$[2.9917,3.0002]$\\
			\cline{2-9}
			&$\widehat{\mathit{\Delta}}_{3,0}^{\Delta_t}$&$-0.0002$&$-0.0089$&$-0.0888$&$-0.3977$&$-0.9812$&$-0.9999$&$-0.9998$\\
			&(Std. Dev.)&$(6.67\mbox{E-}{05})$&$(3.43\mbox{E-}{04})$&$(2.59\mbox{E-}{03})$&$(6.70\mbox{E-}{03})$&$(2.01\mbox{E-}{02})$&$(3.74\mbox{E-}{02})$&$(5.49\mbox{E-}{02})$\\
			\hline
			\multirow{6}{*}{$0.75$}&$\widehat{Y}_0$&$0.0000$&$0.0028$&$0.0395$&$0.2755$&$0.9999$&$1.9999$&$2.9999$\\
			&(Std.Dev.)&$(9.36\mbox{E-}{06})$&$(1.23\mbox{E-}{04})$&$(2.91\mbox{E-}{04})$&$(8.50\mbox{E-}{04})$&$(0.00\mbox{E-}{00})$&$(4.68\mbox{E-}{16})$&$(4.68\mbox{E-}{16})$\\
			&DPDB&$0.0000$&$0.0021$&$0.0382$&$0.2740$&$0.9939$&$1.9928$&$2.9921$\\
			&$95\%$ CI bounds&$[0.0000,0.0008]$&$[0.0021,0.0044]$&$[0.0380,0.0419]$&$[0.2734,0.2763]$&$[0.9938,1.0002]$&$[1.9926,2.0000]$&$[2.9920,3.0002]$\\
			\cline{2-9}
			&$\widehat{\mathit{\Delta}}_{3,0}^{\Delta_t}$&$-0.0000$&$-0.0074$&$-0.0838$&$-0.4014$&$-0.9795$&$-0.9998$&$-0.9996$\\
			&(Std. Dev.)&$(3.12\mbox{E-}{05})$&$(3.02\mbox{E-}{04})$&$(1.67\mbox{E-}{03})$&$(6.55\mbox{E-}{03})$&$(2.07\mbox{E-}{02})$&$(3.73\mbox{E-}{02})$&$(5.49\mbox{E-}{02})$\\
			\hline
	\end{tabular}}
\end{table}
It is noteworthy that the impact of stochastic correlation almost disappears for deep in-the-money options with the given parameter values. This phenomenon is illustrated in Fig. \ref{fig:sec6_6}.
\begin{figure}[h!]
	\centering
	\begin{subfigure}[b]{0.4\textwidth}
		\includegraphics[width=\textwidth]{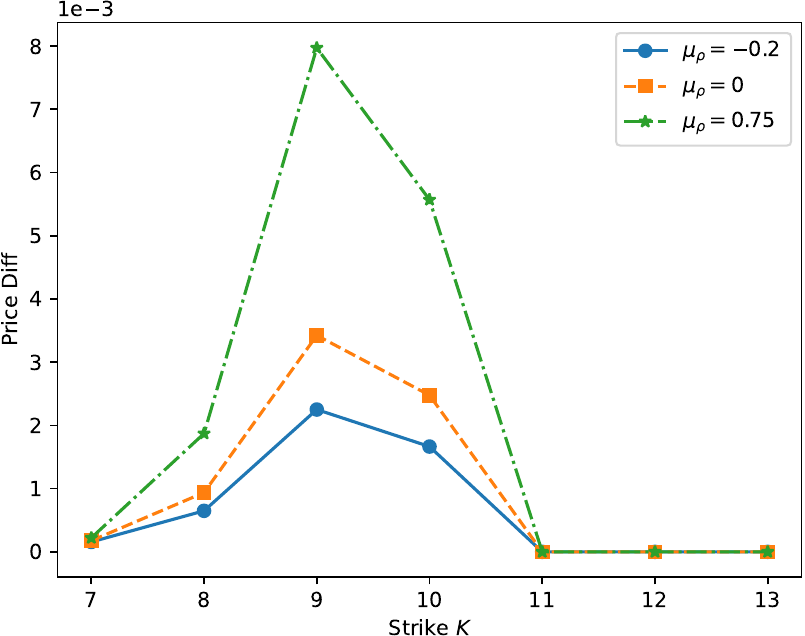}
		\subcaption{The price differences.}
	\end{subfigure}
	~
	\begin{subfigure}[b]{0.4\textwidth}
		\includegraphics[width=\textwidth]{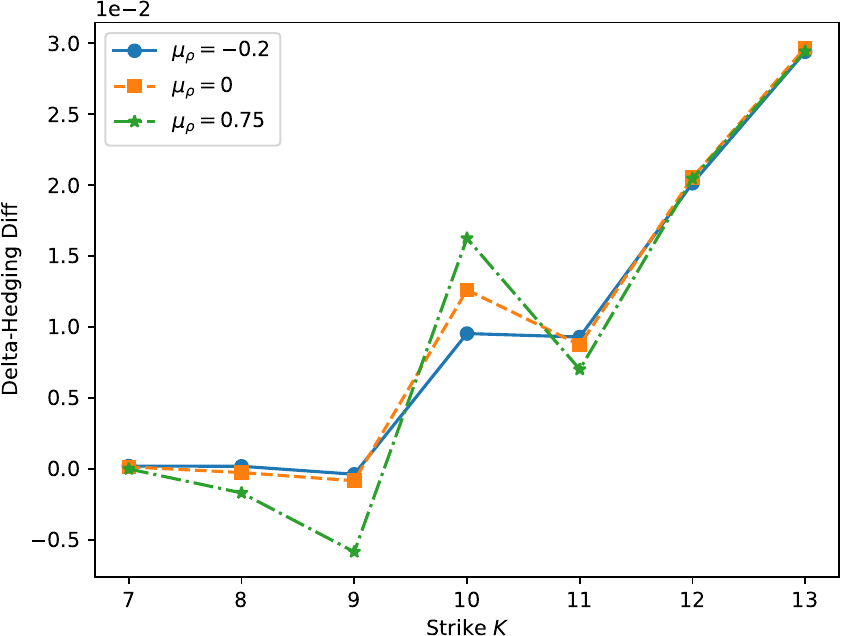}
		\subcaption{The delta-hedging differences.}
	\end{subfigure}
	\caption{The impact of the long-term mean of SCP $\mu_{\rho}$ on both the prices and delta-hedgings of American put options within the HSCLV model.}\label{fig:sec6_6}
\end{figure}
We may conclude that stochastic correlation plays a minor role when other parameters have dominant values in determining the option value. For a better visualization in Fig. \ref{fig:sec6_7}, we set relatively large values for the spot and strike prices, namely $S_0 = 100$ and $K = 105.$ All other parameters are chosen as in Table \ref{tab:sec6_12}, with $\mu_{\rho} = -0.2$, except that we vary $\beta$. In addition, the distribution of the log-price data at $t = \frac{1}{2}$ is reported in Fig. \ref{fig:sec6_6b}.
\begin{figure}[htbp!]
	\centering
	\begin{subfigure}[b]{0.4\textwidth}
		\includegraphics[width=\textwidth]{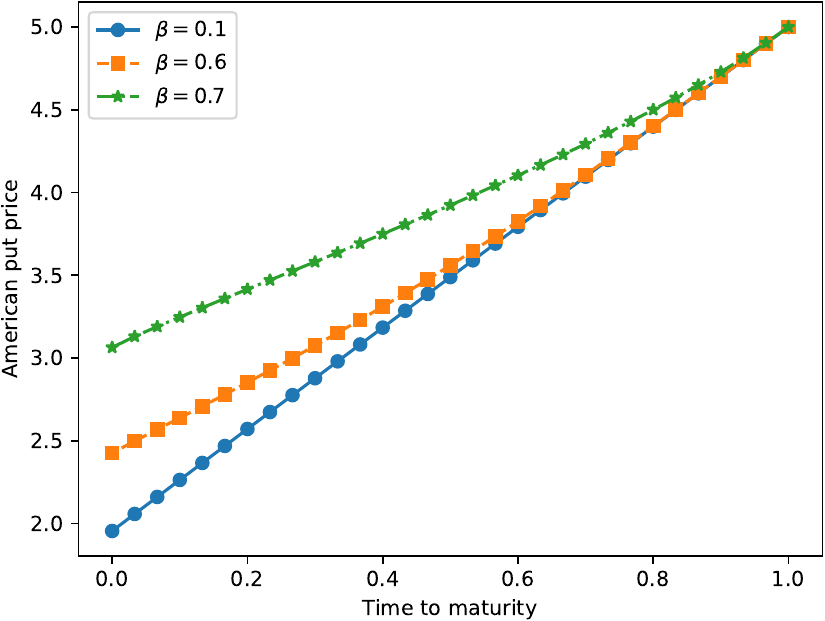}
		\subcaption{Price evolution.}\label{fig:sec6_6a}
	\end{subfigure}
	~
	\begin{subfigure}[b]{0.4\textwidth}
		\includegraphics[width=\textwidth]{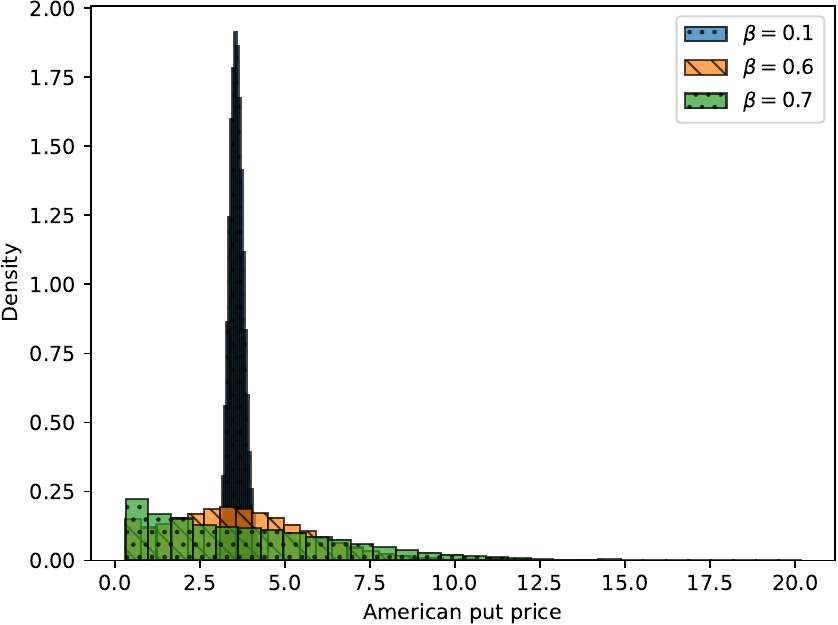}
		\subcaption{The log price at $t=\frac{1}{2}.$}\label{fig:sec6_6b}
	\end{subfigure}
	\caption{Future evolution of American put option price under the HSCLV model for varying $\beta.$ $S_0=100,$ $K=105,$ and all remaining parameter values follow those reported in Table \ref{tab:sec6_12}, with $\mu_{\rho} = -0.2.$}\label{fig:sec6_7}
\end{figure}\\

\section{Conclusion}\label{sec:conclusion}
One method to hedge correlation risk is to incorporate stochastic correlation into pricing and hedging strategies. This work studies the pricing of American options under SLV models, extended by introducing stochastic correlation driven by a stochastic process, referred to as the SCLV models. The SLV models already pose significant challenges for standard pricing techniques, and these challenges are further compounded in the SCLV framework.

To address these challenges, we propose an RBSDE–based approach for pricing and hedging American options under stochastic correlation, volatility, and spot price dynamics. A key advantage of the RBSDE methodology is that it also enables computation of the future evolution of the option price. To solve the resulting RBSDEs, we develop a numerical algorithm based on XGBoost regression. The SABSR and HSCLV models are analyzed in detail for the pricing and hedging of American options. Comparisons with state-of-the-art deep learning methods demonstrate the effectiveness of the RBSDE approach, the high accuracy of the regression-based algorithm, and the significant impact of stochastic correlation on both pricing and delta-hedging.

Future research directions include addressing the challenges posed by ill-conditioned systems of equations, with the goal of obtaining robust numerical approximations for hedging under stochastic volatility and correlation. Moreover, there is a growing need for faster numerical methods for solving RBSDEs, particularly to enable efficient calibration of SCLV models.

\section*{Acknowledgement(s)}

We gratefully acknowledge the authors of \cite{Hure2020} and \cite{Yang2024b} for making their code available. \\No competing interest was reported by the authors(s).

\section*{Funding}
The work was funded by the Deutsche Forschungsgemeinschaft (DFG, German Research Foundation)- Project No. [581639701].

\end{document}